\newcommand{\cmark}{\ding{51}}
\newcommand{\xmark}{\ding{55}}
\newcommand{\yifeng}{\textcolor{blue}}
\newtheorem{theorem}{\textbf{\emph{Theorem}}}
\newtheorem{definition}{\textbf{\emph{Definition}}}
\newcommand{\msf}{\mathsf}
\begin{document}
\title{Aggregation Service for Federated Learning: An Efficient, Secure, and More Resilient Realization}

\author{Yifeng~Zheng, Shangqi Lai, Yi Liu, Xingliang Yuan, Xun Yi, and Cong Wang, \emph{Fellow, IEEE}
\IEEEcompsocitemizethanks{

\IEEEcompsocthanksitem Yifeng Zheng is with the School of Computer Science and Technology, Harbin Institute of Technology, Shenzhen, Shenzhen 518055, China. E-mail: yifeng.zheng@hit.edu.cn.

\IEEEcompsocthanksitem Shangqi Lai and Xingliang Yuan are with the Faculty of Information Technology, Monash University, Australia. E-mail: shangqi.lai@monash.edu, xingliang.yuan@monash.edu.

\IEEEcompsocthanksitem Yi Liu and Cong Wang are with the Department of Computer Science, City University of Hong Kong, Hong Kong. E-mail: 97liuyi@gmail.com, congwang@cityu.edu.hk.

\IEEEcompsocthanksitem Xun Yi is with the School of Computing Technologies, RMIT University, Australia. E-mail: xun.yi@rmit.edu.au.

% \IEEEcompsocthanksitem Cong Wang is with the Department of Computer Science, City University of Hong Kong, Hong Kong. E-mail: congwang@cityu.edu.hk.

% \IEEEcompsocthanksitem Corresponding authors: Yifeng Zheng and Cong Wang.
}
}

\IEEEtitleabstractindextext{
\begin{abstract}
Federated learning has recently emerged as a paradigm promising the benefits of harnessing rich data from diverse sources to train high quality models, with the salient features that training datasets never leave local devices. Only model updates are locally computed and shared for aggregation to produce a global model. While federated learning greatly alleviates the privacy concerns as opposed to learning with centralized data, sharing model updates still poses privacy risks. In this paper, we present a system design which offers efficient protection of individual model updates throughout the learning procedure, allowing clients to only provide obscured model updates while a cloud server can still perform the aggregation. Our federated learning system first departs from prior works by supporting lightweight encryption and aggregation, and resilience against drop-out clients with no impact on their participation in future rounds. Meanwhile, prior work largely overlooks bandwidth efficiency optimization in the ciphertext domain and the support of security against an actively adversarial cloud server, which we also fully explore in this paper and provide effective and efficient mechanisms. Extensive experiments over several benchmark datasets (MNIST, CIFAR-10, and CelebA) show our system achieves accuracy comparable to the plaintext baseline, with practical performance.
\end{abstract}

\begin{IEEEkeywords}
Federated learning, secure aggregation, privacy, quantization, computation integrity
\end{IEEEkeywords}}

% make the title area
\maketitle

\IEEEdisplaynontitleabstractindextext

\IEEEpeerreviewmaketitle

\section{Introduction}

Federated learning has rapidly emerged as a fascinating machine learning paradigm \cite{McMahanMRHA17} which allows models to be trained on data dispersed over a number of mobile devices while each client can keep its dataset locally.
Clients never share the raw datasets, and instead only periodically share model updates locally trained with their datasets, which are then aggregated by a coordinating server to produce a global model.
Federated learning thus promises reaping the benefits of harnessing rich data from diverse sources to train high quality models, without clients being worried about the security and privacy risks of centralizing their raw data to a single place for training as in conventional practice.
%
% Besides, federated learning can cope with non-IID (independent and identically distributed) training data.
%
% All these advantages bring federated learning a lot of traction in recent years.
%

While sharing model updates instead of raw datasets has greatly alleviated the privacy concerns as opposed to learning with centralized data, it still entails risks of private information leakage \cite{MelisSCS19}.
Protection of individual model updates is thus still necessary.
Additively homomorphic encryption, which allows addition of private plaintext values to be securely performed over their ciphertexts, offers a feasible solution. 
Some recent works using this technique for building privacy-preserving federated learning systems have been presented \cite{XuBZAL19,TruexBASLZZ19,PhongAHWM18,zhang2020batchcrypt}.
However, homomorphic encryption is expensive and also poses additional trust assumptions (e.g., sharing a private key across data holders) in the context of federated learning (see Section \ref{sec:related-work} for more discussion).
A different work is due to Bonawitz et al. \cite{BonawitzIKMMPRS17}, which is free of expensive cryptography and allows lightweight encryption and aggregation.

When designing secure federated learning systems, resilience against drop-out clients failing to submit model updates for aggregation is a practical requirement as some clients may face issues like poor network connections, energy constraints, or temporary unavailability \cite{LiSTS20}. 
The work \cite{BonawitzIKMMPRS17} provides mechanisms to handle drop-out clients, yet it undesirably prevents drop-out clients from directly and safely engaging in any future rounds of aggregation unless a new key setup is re-conducted.
%
% More detailed discussion is provided in Section \ref{sec:related-work}.
Therefore, how to achieve secure and lightweight processing in federated learning while maintaining practical drop-out resilience with no impact on drop-out clients' participation in any future rounds remains to be fully explored.

In addition, most of prior works largely overlook the communication efficiency aspect of federated learning in the encrypted domain, which could be affected due to ciphertext size expansion or loss of precision in training due to the adaption of plaintext processing for compatibility with cryptographic processing.
%
% none of the existing works have considered boosting communication efficiency in the encrypted domain, and resilience against malicious adversary.
%
Furthermore, most of them only provide semi-honest security against the passively adversarial server, assuming the server faithfully conduct the designated processing, and they are not resilient to an active adversary that may compromise the integrity of the processing at the server side.

In light of the above observations, in this paper, we present a new system design enabling federated learning services with lightweight secure and resilient aggregation.
Our system enables clients holding proprietary datasets to only provide obscured model updates while aggregation can still be supported at a cloud server to produce a global model.
By newly adapting a cherry-picked aggregation protocol for federated learning that we identified from the literature \cite{KursaweDK11,MelisDC16}, our system promises practical efficiency on encrypting the model updates at the client as well as aggregating the obscured model updates at the cloud server.
No expensive cryptographic operations are required throughout the federated learning procedure.
Meanwhile, our system is drop-out resilient and outperforms the best prior work \cite{BonawitzIKMMPRS17} in the sense that the secret keys of drop-out clients are kept private so they still can directly participate at a later stage without the need for a new key setup.

With the above new secure design point for federated learning as a basis, we explore and present refinements in terms of boosted communication efficiency and stronger security.
We start with consideration on the communication side, a known bottleneck for federated learning, due to various reasons like large sized models, limited client uplink bandwidth, and large numbers of participating clients \cite{McMahanMRHA17}.
We thus explore the potential of compressing model updates before secure aggregation so as to reduce the communication overheads in our system.

In the literature, various kinds of techniques for compressing the model updates have been proposed, such as sparsification, subsampling, and quantization \cite{LiSTS20}.
Our observation is that quantization delicately represents the (fractional) values in a model update as integers, which is also the type of data required for secure aggregation.
So our insight is to integrate the advancements in quantization with secure aggregation.
However, most of existing quantization schemes are not secure aggregation friendly, as  de-quantization requiring computation beyond summation has to be conducted before the quantized model updates can be aggregated.
We make an observation that a newly developed quantization technique \cite{zhang2020batchcrypt} (originally tailored for homomorphic encryption) can suit our purpose as no de-quantization is required before aggregation.
Building on this new technique, our system allows clients to provide obscured quantized model updates, achieving a reduction in the communication overhead.
To correctly integrate this quantization technique and make it function well, we address some practical considerations in our system including the prevention of overflow in aggregating quantized values and the identification of negative values in de-quantization.

Apart from the boosted communication efficiency side, we also investigate mechanisms to make our system more resilient, achieving stronger security against an actively adversarial cloud server, beyond the semi-honest security setting commonly assumed.
The goal here is to ensure that the processing for the federated learning service is correctly enforced at the cloud server.
Aiming for a pragmatic solution, we resort to the emerging techniques of hardware-assisted trusted execution environments (TEEs) to shield the computation integrity at the cloud server side throughout the federated learning procedure.
As a practical instantiation, our system makes use of the increasingly popular Intel SGX \cite{McKeenABRSSS13,intelsgx}.
We do not rely on the confidentiality guarantee which is originally targeted by trusted hardware, due to the emergence of various side-channel attacks \cite{TramerZLHJS17}.
This is different from most of existing works which assume both confidentiality and integrity when using trusted hardware for secure computation.
We give the abstract functionality assumed out of trusted hardware for our federated learning system, and further present a concrete protocol that renders federated learning with secure aggregation with computation integrity against an actively adversarial server.

We conduct extensive experiments over the popular benchmark datasets (MNIST, CIFAR-10, and CelebA), and different deep neural network models, with $21,840$ parameters, $23,272,266$ parameters, and $13,962,562$ parameters, respectively.
We evaluate the accuracy evolution over varying rounds and demonstrate our system shares similar behavior with the plaintext baseline and achieve comparable accuracy, even if quantization is applied to the model updates (which achieves up to $4\times$ reduction in communication).
The client-side security cost in encryption as well as in dealing with drop-out is thoroughly examined, which is on the order of a few seconds for the MNIST model and of a few minutes for larger CIFAR-10 and CelebA models.

Our evaluation on the server-side aggregation demonstrates that our system with security against an actively adversarial cloud server incurs almost no overhead over the semi-honest adversary setting.
We also make a performance comparison with the state-of-the-art \cite{BonawitzIKMMPRS17} (which exposes the secret keys of drop-out clients).
The results demonstrate that our system is (up to $39\times$) much more efficient under zero client drop-out and only incurs small computation overhead (limited to $2.3\times$) under varying drop-out rates.
Besides, our system (with a semi-honest cloud server) has less server computation as the drop-out rates and number of participating clients increase.

We highlight our contributions as follows:

% \vspace{-5pt}
\begin{itemize}

\item We present a new system design enabling federated learning services with lightweight secure and resilient aggregation. Compared to prior work, our system can handle client drop-out while keeping their secret keys confidential, so their direct participation in future rounds is not affected.

\item We explore quantization-based model compression and newly make unique integration with secure aggregation to boost the communication efficiency in our system. We also present practical mechanisms to make our system more resilient, achieving stronger security against an actively adversarial server.

\item We conduct extensive experiments over multiple real-world datasets and extensively evaluate the accuracy, client-side, and server-side performance. The results demonstrate that our system  achieves accuracy comparable to the plaintext baseline, with practical performance.
\end{itemize}

The rest of this paper is organized as follows.
Section \ref{sec:related-work} discusses the related work.  
Section \ref{sec:prelimianries} introduces some preliminaries.
Section \ref{sec:system_overview} gives a system overview.
Section \ref{sec:design_sec_agg} presents the design of secure aggregation in our system. 
Section \ref{sec:quantization design} shows how to integrate quantization to boost the communication efficiency.
Section \ref{sec:integrity-assued-design} gives the design of endowing our system with integrity.
Section \ref{sec:experiments} presents the experiments.
Section \ref{sec:conclusion} concludes the whole paper.

\section{Related Work}
\label{sec:related-work}

% \yifeng{to do: give a table for comparing related works. in aspects of lightweight crypto, sharing a common key, resilience against drop-out}

% \textbf{Federated learning with secure aggregation.}
%
Our research is related to the line of work on federated learning with secure aggregation to protect the individual model updates.
To the best of our knowledge, none of existing works tackle exactly the same problem as our work, i.e., federated learning with secure and lightweight aggregation, drop-out resilience, and computation integrity against an adversarial server.

Some works rely on expensive homomorphic encryption \cite{TruexBASLZZ19,zhang2020batchcrypt,PhongAHWM18}, incurring high performance overheads.
The use of homomorphic encryption also requires all clients to either share a common secret key \cite{zhang2020batchcrypt,PhongAHWM18}, or hold secret key shares which have to be generated via expensive multi-party protocols or distributed by a trusted third party (TTP) \cite{TruexBASLZZ19}.
Xu et al. \cite{XuBZAL19} propose a scheme using functional encryption, requiring a TTP for setting up keys as well as getting involved in the learning process.
Our system is free of such a TTP.

In \cite{BonawitzIKMMPRS17}, Bonawitz et al. propose a lightweight encryption scheme supporting secure aggregation for federated learning. 
Their scheme can handle drop-out clients in an aggregation round, but would reveal their secret keys.
When the scheme is used in federated learning, drop-out clients in a certain round are thus not able to directly and safely participate in any future rounds \emph{unless} a new key setup is conducted.
%
% Note that in federated learning typically only a subset of clients is selected in each round.
%
% So the limitation of their scheme implies that even the clients that are not selected in the first round are not safe to participate in any future rounds unless a key setup is re-conducted.
% 
We note that the design of \cite{BonawitzIKMMPRS17} has a double-masking mechanism, which is to prevent the server from learning the data of users who are too late in sending their masked vectors and are assumed to drop-out by the server (so the server recovers their secret keys).

There are some follow-up works \cite{abs-2002-04156,abs-2009-11248,abs-2012-05433} that attempt to the improve \cite{BonawitzIKMMPRS17} from the \emph{performance} aspect, yet they require additional assumptions regarding client topology information \cite{abs-2002-04156,abs-2012-05433} or delicate privacy-efficiency balance \cite{abs-2009-11248}.
Specifically, the work of So et al. \cite{abs-2002-04156} applies a multi-group circular strategy for model aggregation, which partitions clients into communication groups and operates under a multi-group communication structure that relies on the topology of users and leads to a number of execution stages that has to be conducted sequentially across the groups.
Each client in a group has to communicate with every client in the next group.
% communication with every client in the next group. that is dependent on the specific topology of users and leads to a number of execution stages performed sequentially across the groups.
%
The work of Choi et al. \cite{abs-2012-05433} has to appropriately organize clients in a graph structure based on the specific topology information and uses the graph to represent how public keys and secret shares are assigned to the other clients, so each client is aware of its neighboring clients.
The work of Kadhe et al. \cite{abs-2009-11248} constructs a multi-secret sharing scheme and leverages it to design a secure aggregation scheme, which needs to delicately balance the trade-off between the number of secrets, privacy threshold, and dropout tolerance.
Meanwhile, the resilience against client dropout is restricted to some pre-set \emph{constant} fraction of clients.

The most related prior work thus is \cite{BonawitzIKMMPRS17}. Compared with \cite{BonawitzIKMMPRS17}, our system also allows lightweight encryption for the clients but does not reveal the secret keys of drop-out clients in any rounds.
That said, drop-out or non-selected clients in a certain round are still able to directly have safe participation in future rounds in our system.
It is noted that the protocol in \cite{BonawitzIKMMPRS17} consists of multiple rounds (including key setup), which need to be all executed over the clients selected in each iteration when applied to federated learning.
%
% runs in five rounds, which need to beshould be run for each iteration of federated learning due to the secret key generation which requires multiple interactions between the selected clients in that iteration and the server for advertising keys and shares of keys. 
%
Given the possibility of client drop-out being considered, it may not be promising to assume that each selected client in an iteration will well get involved in those interactions for key setup. A new key setup in each iteration of federated learning thus might hinder that iteration in proceeding normally.
Indeed, once the number of dropped clients (with all interactions rounds considered in an iteration of federated learning) exceeds a threshold, the current iteration of federated learning has to abort and start over.
In contrast, our protocol has minimal interactions among the selected clients and the server in each iteration of federated learning.
In fact, as opposed to plaintext-domain federated learning, our protocol just needs one additional round of interaction for handling client drop-out upon secure aggregation.
It is further worth noting that compared with existing works, our system ambitiously and newly provides assurance on computation integrity against the cloud server.

\begin{table}[t!]
\centering
\small

\caption{Comparison of federated learning systems with secure aggregation. ``\textbf{L}''ightweight cryptography.``\textbf{R}''esilience against drop-out clients so that secure aggregation can still be correctly accomplished. ``\textbf{K}''eys kept secret in handling dropouts so that drop-out clients' secret keys are not exposed.  ``\textbf{S}''ecurity for computation integrity against the server. ``\textbf{C}''ompression and security co-design for high communication efficiency while ensuring security.  ``\textbf{I}''ndividual keys without TTP.}
\begin{tabular}{@{}cllllll@{}}
\toprule
\textbf{System} & \textbf{L} & \textbf{R} & \textbf{K} & \textbf{S} & \textbf{C} & \textbf{I} \\ \midrule
Xu et al. \cite{XuBZAL19} & \xmark & \cmark & \cmark & \xmark& \xmark &\xmark \\
Phong et al. \cite{PhongAHWM18}       & \xmark   &\xmark   & \cmark  & \xmark  & \xmark   & \xmark  \\
Truex et al. \cite{TruexBASLZZ19} & \xmark &\cmark &\cmark &\xmark &\xmark &\xmark \\
Zhang et al. \cite{zhang2020batchcrypt}       & \xmark   &\cmark   & \cmark  & \xmark  & \cmark   & \xmark  \\
Mandal et al. \cite{MandalG19} & \xmark & \cmark & \cmark &\xmark &\xmark & \xmark \\
Bonawitz et al. \cite{BonawitzIKMMPRS17}       & \cmark  & \cmark  & \xmark  & \xmark  & \xmark  & \cmark  \\ 
\textbf{This work}       & \cmark  & \cmark  & \cmark  & \cmark  &\cmark   & \cmark  \\ \bottomrule
\end{tabular}
\label{table:comparison}
\end{table}

In an independent work, Mandal et al. \cite{MandalG19} consider a special scenario where they hide the model from the clients and rely on homomorphic encryption to build specific federated linear/logistic regression models. 
Our system follows most prior works with focus on protecting individual model updates and is general for training any models.
We note that some previous works \cite{TruexBASLZZ19,XuBZAL19} also integrate orthogonal differential privacy techniques by adding calibrated noises to local model updates and lead to an inherent trade-off on accuracy and privacy with delicate parameter tuning, which is complementary to our system.
Table \ref{table:comparison} summarizes the comparison with the most related works discussed above.

Our work is also relevant to prior work on building hardware-assisted security applications.
Most of existing works assume confidentiality and integrity guarantees from the trusted hardware (e.g., Intel SGX) for secure computation (e.g., \cite{SchusterCFGPMR15,OhrimenkoSFMNVC16,BahmaniBBPSSW17,ChoiTHPMSBT19,TramerB19}, to just list a few).
Given that the confidentiality guarantees may face threats from various side-channel attacks, a trending practice is to relax the trust assumptions and only assume integrity.
A few works have been presented in different contexts including zero-knowledge proofs \cite{TramerZLHJS17}, multi-party machine learning inference \cite{Kumar2020}, client-side checking of model training \cite{ZhangLZLWW20}, and aggregation over blockchain-stored data under homomorphic encryption \cite{DuanZDZWA19}.
Inspired by this trend, our system only assumes minimally trusted hardware with integrity guarantees, and explores a new design point for enforcing server-side computation correctness throughout the federated learning procedure.

\section{Preliminaries}
\label{sec:prelimianries}
\subsection{Federated Learning}

Federated learning enables multiple data owners to jointly solve an optimization problem which could be formulated as: $\min \sum\nolimits^S_{s=1}{\frac{1}{S}\cdot L(\mathbf{w},\mathcal{D}_s)}$, where $S$ is the number of data owners, $L(\mathbf{w},\mathcal{D}_s)$ is a loss function capturing how well the parameters $\mathbf{w}$ (treated as a flattened vector) model the local dataset $\mathcal{D}_s$.
During the learning procedure, each data owner only shares a locally trained model update.
The model updates are typically aggregated by a server and used to update the global model.
This is an iterative procedure and runs in multiple rounds.

Each round proceeds through the following steps: (1) A fraction of the data owners (say $K$ data owners) is selected by the server and a current global model $\mathbf{w}$ is sent to these data owners.
(2) Each selected data owner then performs training over its local dataset, for which any optimizers could be used, though stochastic gradient descent (SGD) is the most popular one.
With SGD, the $k$-th selected data owner updates the local model parameters via $\mathbf{w}_k \leftarrow \mathbf{w}-\eta\cdot \nabla L(\mathbf{w};\beta)$, where $\beta$ is a batch randomly sampled from the local dataset and $\eta$ is the learning rate.
A full iteration over the whole local dataset is referred to as an epoch, and the local training could be performed over multiple epochs.
(3) Once the local training is done, an individual model update $\mathbf{w}_k$ is shared for aggregation: $\mathbf{w}=\frac{1}{K}\sum\nolimits^K_{k=1}{\mathbf{w}_k}$, which produces an updated global model for next round.
In Table \ref{tbl:notations}, we provide a summary of the main notations used in this paper.

\begin{table}[t!]
\centering
\small
\caption{Key Notations}

\begin{tabular}{cl}
\toprule
Notation      & Description                                        \\ \midrule
$\mathbf{w}$    & Aggregate model      \\ 
$\mathbf{w}^t_k$          & Model update from client $\mathcal{C}_k$ in round $t$              \\ 
$(SK_i,PK_i)$          & Key pair of client $\mathcal{C}_i$ in secure aggregation\\ 
$CK_{i,j}$          & Shared key computed from $\mathbf{SK}_i$ and $\mathbf{PK}_j$ \\ 
$\mathcal{T}$          & Set of selected clients in each round \\ 
$\mathcal{T}_{o}$          & Set of online clients in each round \\ 

$\mathcal{T}_{d}$          & Set of drop-out clients in each round \\ 

$\mathbf{r}_{k}$          & The vector of blinding factors of client $\mathcal{C}_k$ \\ 
$Q_{r}$          & The $r$-bit quantizer used in our system \\ 
$Q^{-1}_{r}$          & The $r$-bit de-quantizer used in our system \\ 
$(sk_{\mathcal{C}_i,},pk_{\mathcal{C}_i})$          & Signing key pair of client $\mathcal{C}_i$ \\ 
\bottomrule 
\end{tabular}
\label{tbl:notations}
\end{table}

% Training over data stored on the client devices

% In this paradigm, there is a Federated Averaging algorithm $\mathsf{FedAvg}$ which aggregates the model updates collected from the clients and produces an updated global model.
%

\iffalse
In round $t$, the global model is $W_t$.
%
Each client computes a local model $W^i_t$ based on the global model and its local dataset.
%
The local model $W^i_t$ could be computed by conducting multiple passes of learning over the local dataset. 
%
Finally, each client computes a model update as $H^i=W^i_t-W_t$.
%
On receiving the model updates from the clients, the server computes the average, i.e., $(\sum\nolimits^i{H^i})/n$.

\fi

\subsection{Transparent Enclave}

Trusted hardware techniques (e.g., Intel SGX \cite{intelsgx,McKeenABRSSS13}) allow the creation of protected memory regions called enclaves which are isolated from the rest of a host's software, including the operating system.
Trusted hardware is designed with the aim of offering confidentiality and integrity assurance.
Yet, it has been recently shown that the confidentiality guarantee may be compromised by a series of side-channel attacks \cite{TramerZLHJS17}.
The goal of achieving confidentiality with trusted hardware thus remains elusive so far.
A recent trend on building trusted hardware based security applications \cite{TramerZLHJS17,Kumar2020} is thus to only leverage the computational integrity.
By relying on only the integrity, it is assumed that the internal states (i.e., all the code and data) of enclaves are transparent to the host (or the corrupted party).
Only secure code attestation and secure signing functionality are required. 
Enclaves with such minimal trust assumption are referred to as transparent enclaves \cite{TramerZLHJS17}.
%
% We follow the trend and propose a design to achieve maliciously secure federated learning against the cloud server.

\section{System Overview}
\label{sec:system_overview}
\subsection{Architecture}

\begin{figure}[t!]
\centerline{\includegraphics[width=0.43\textwidth]{./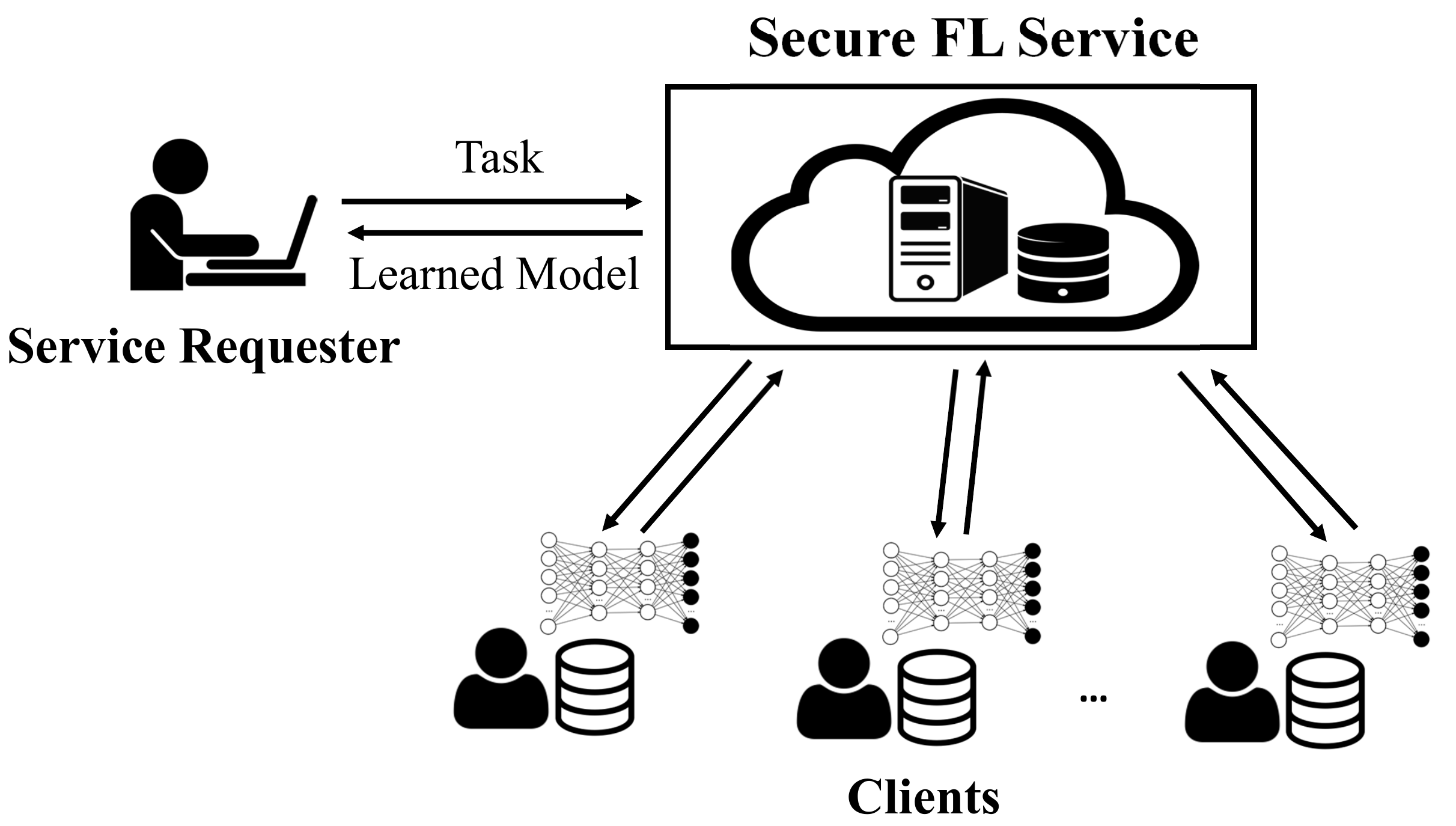}}
\caption{Overview of our system architecture.}
\label{fig:system_architecture}
\end{figure}

Fig. \ref{fig:system_architecture} illustrates our system architecture, which involves three actors: the service requester (or \emph{requester} for short), clients, and the secure federated learning service which bridges the requester and clients.
The requester wants to harness the power of federated learning and releases a task via the secure federated learning service.
Such service could be hosted on the cloud, and run by a cloud server.
Here the cloud server is an abstraction and can be implemented by an actual server or a cluster of servers.
The clients are data owners interested in working on the federated learning task and sign up at the cloud server.
All participants in the system agree in advance on a common machine learning model architecture and common learning objective.
The clients may receive rewards from the requester for the participation.

To realize the secure federated learning service, our first aim is that in each round the cloud server should be able to perform aggregation of the individual model updates without seeing them in the clear.
So, we will craft our design such that each client selected in a round can just provide an obscured model update, which still effectively allows aggregation.
For high efficiency, our system will preclude the use of expensive homomorphic encryption and only rely on lightweight cryptography.
Considering that communication is a known bottleneck in federated learning, our system will also cherry-pick and integrate appropriate quantization techniques to compress the model updates, so that a client can just share an obscured and quantized model update.

In addition to the confidentiality protection of individual model updates in each round of federated learning, our system also ambitiously aims to \emph{efficiently} ensure that the federated learning service is correctly provisioned by the cloud server. 
While theoretically (expensive) cryptography-based verifiable computation techniques may help \cite{XuLL0L20}, we aim for a pragmatic solution and observe the trending practice is to leverage hardware-assisted trusted execution environments (TEE) \cite{TramerZLHJS17}.
Our system follows this trend and only assumes a minimally trusted hardware with integrity guarantees.
Specifically, we will rely on a transparent enclave to shield the computation integrity at the cloud server.

\subsection{Threat Model}

Our system for secure federated learning considers an adversarial cloud server.
It could be semi-honest, which means the cloud server will honestly follow the protocol for the federated learning service, yet is curious about individual model updates so as to infer clients' local datasets.
The semi-honest threat model is commonly adopted in privacy-preserving data-centric services in cloud computing \cite{ZhengDTWZ21,LiuZYY21}.
For this setting our system aims to maintain the confidentiality of individual model updates. The cloud server is only allowed to learn an aggregate model update and global model.
Beyond the semi-honest adversary setting, a stronger threat model will also be considered where the cloud server may not correctly follow the designated computation.
For this setting, our system further aims to enforce correct computation at the cloud server besides the confidentiality of individual model updates.

Collusion between the cloud server and a subset of selected clients in each round is also likely, and for such case our system aims to maintain the confidentiality of model updates of honest clients.
Here, we are not concerned with the extreme (and also hardly realistic given that clients are many and geographically distributed) case that there is only one honest client and all other clients are compromised by the cloud server.
Note that the assumption of at least one honest client is necessary in the problem of secure data aggregation and also (either explicitly or implicitly assumed) in existing works \cite{LiCP14,ZhangCZ17,MelisDC16,BonawitzIKMMPRS17}.
It is obvious that for any practical secure aggregation schemes that securely compute the sum, if the aggregator colludes with all but one of the clients, the data of the colluding clients can be surely removed from the sum and the honest client's data will be revealed.
Therefore, like existing works, our system does not provide data protection against the extreme colluding case.

It is noted that our system currently does not protect against possible attacks on the aggregate model updates and global model, which can directly be mitigated by letting clients do noise addition as per complementary differential privacy techniques \cite{TruexBASLZZ19}.
Adversarial attacks like poisoning attacks \cite{FangCJG20} and backdoor attacks \cite{BagdasaryanVHES20} are complementary research areas and out the scope of this work.

\section{Empowering Federated Learning with Secure and Efficient Aggregation}

\label{sec:design_sec_agg}
\subsection{Overview}

Our system is aimed at federated learning service with secure aggregation that is free of heavy cryptography.
Meanwhile, our system should be able to work without a trusted third party for setting up keys for secure aggregation and/or assistance in the learning procedure.
All these practical requirements preclude considerations on the use of expensive homomorphic encryption which is also confronted with key distributions issues as mentioned before.
Recall that although the secure aggregation scheme in \cite{BonawitzIKMMPRS17} allows lightweight encryption and aggregation without a trusted third party, it reveals the secret keys of drop-out clients in an aggregation round, which consequently prevents them from direct participation in future rounds unless a new key setup is conducted in a subsequent round.

In our system, we propose to efficiently protect the confidentiality of individual model updates in federated learning with a cherry-picked low overhead cryptographic aggregation protocol \cite{MelisDC16}.
We resort to this protocol due to the observations that it does not involve heavy cryptographic operations (only lightweight hashing and arithmetic operations are needed), does not require a trusted third party, and will not reveal the secret keys of drop-out clients.
With this protocol as a basis, we craft a thorough design for federated learning with secure aggregation of individual model updates.
Later in Section \ref{sec:quantization design} and Section \ref{sec:integrity-assued-design}, we will work through the refinements that promise boosted communication efficiency and security.

\subsection{Protocol}

We now present the protocol for federated learning with secure aggregation of model updates.
Note that as required by the cryptographic computation, the values in the vector representing the local model update should integers, which, however, could be fractional values in practice. 
%
% However, the values in the local model update are typically fractional values.
% , which are not directly compatible with the cryptographic aggregation protocol.
%
We adopt a common scaling factor trick where a large enough scaling factor is used to scale up a fractional value into an integer.
Specifically, given a fractional value $v$ and a scaling factor $L$, we can obtain an integer representation of $v$ as $\overline{v}= \left\lfloor v \cdot L \right\rfloor$.
The approximate $v$ can be later reconstructed as $\overline{v}/L$ \cite{WangRWW13}.
Applying such trick in our context, we only need to scale down the aggregate model update.
To ensure that the scaling operation does not compromise the quality of computation result, the message space for the scaled integers should be large enough (say $2^{32}$ or $2^{64}$).
Our protocol, as shown in Algorithm \ref{algo:sec-fl-algo}, proceeds as follows.

\noindent\textbf{Initialization.}
Suppose that there are totally $S$ clients in the system.
We write $\mathcal{S}$ to denote the set of clients, in which each client is uniquely indexed by an integer $i\in[1,S]$.
For initialization, each client generates its own key pair and leverages the cloud server as a central hub for distribution of public keys.
Let $\mathbb{G}$ be a cyclic group of prime order $p$, with generator $g$.
Also, let $\mathsf{H} $$: $$\{0,1\}^{*} $$\rightarrow $$\mathbb{Z}_p$ be a cryptographic hash function mapping arbitrary-length strings to integers in $\mathbb{Z}_p$.
Each client $\mathcal{C}_i$ generates a private key $SK_i=x_i \in \mathbb{Z}_p$ and a public key $PK_i$$=$$ g^{x_i}\in \mathbb{G} $, and uploads $PK_i$ to the cloud server. 
Then, each client $\mathcal{C}_i$ downloads other users' public keys and computes $CK_{i,j}=\mathsf{H}((PK_j)^{x_i})$, where $i, j\in \mathcal{S}$ and $j\ne i$. 
%
% This completes the initialization.
%

\setlength{\textfloatsep}{5pt}% Remove \textfloatsep
\begin{algorithm}[!t]
%\footnotesize
%\fontsize{8pt}{8pt}
\caption{Our Design for Secure Federated Learning}
\label{algo:sec-fl-algo}
\begin{algorithmic}[1]
\STATE \textbf{\underline{Initialization:}}

\FOR{each client $\mathcal{C}_i$}

\STATE $SK_i=x_i \mathop \in \mathbb{Z}_p$.
\STATE $PK_i$$=$$ g^{x_i}\in \mathbb{G}$.
\STATE Upload $PK_i$ to the cloud server.
  \FOR{each $j\in \mathcal{S},j\neq i$} 
    \STATE Download $PK_j$ from the cloud server.
    \STATE Compute  $CK_{i,j}=\mathsf{H}((PK_j)^{x_i})$.
  \ENDFOR
\ENDFOR

\STATE \textbf{\underline{Cloud server executes:}}
\STATE Initialize $\mathbf{w}^0$.

\FOR{each round $t=1,2,\cdots$}

  \STATE Select a fraction of the clients and produce the set $\mathcal{T}^{t}$.

  \FOR{each client $\mathcal{C}_k$ in $\mathcal{T}^{t}$} 
    \STATE $\mathbf{w'}^{t}_k$ $\leftarrow$ ClientUpdate($\mathbf{w}^{t-1}$,$\mathcal{T}^t$).
  \ENDFOR

  \STATE $\mathbf{w'} \leftarrow \sum\nolimits_{k\in\mathcal{T}^t} {\mathbf{w'}^t_k} \bmod M$ .
  % \STATE $d\leftarrow \sum\nolimits_{m\in\mathcal{T}^j} {d_m}$
  \STATE $\mathbf{w}^t \leftarrow \frac{1}{|\mathcal{T}^{t}|} \cdot\mathbf{w'}$.
\ENDFOR

\vspace{2pt}

\STATE \textbf{ClientUpdate($\mathbf{w}^{t-1}$,$\mathcal{T}^t$):}

\vspace{2pt}

\STATE $\mathbf{w}\leftarrow \mathbf{w}^{t-1}$.

% \STATE $d_m \leftarrow |\mathcal{D}_m|$ // Size of local dataset

\STATE Split the dataset $\mathcal{D}_k$ into batches $\mathcal{B}$.

\FOR{each local epoch $e$ from $1$ to $E$}
  \FOR{each batch $\beta$ in $\mathcal{B}$} \STATE $\mathbf{w}\leftarrow \mathbf{w}-\eta\cdot \nabla L(\mathbf{w};\beta)$. // $\eta$ is the learning rate.  \ENDFOR
\ENDFOR

\STATE $\mathbf{w}^t_k\leftarrow \mathbf{w}$.

\STATE Generate the vector $\mathbf{r}^t_k$ of blinding factors. Each element $\mathbf{r}^t_k(b)$$=$$\sum\nolimits_{n \in\mathcal{T},n \ne k} { (-1)^{k>n} \mathsf{H}(CK_{k,n}||b||t)}$.

\STATE Compute $\mathbf{w'}^t_k\leftarrow \mathbf{w}^t_k+\mathbf{r}^t_k\bmod M$.

\STATE Return $\mathbf{w'}^t_k$ to the cloud server.
\end{algorithmic}

\end{algorithm}

\noindent\textbf{Secure federated learning.}
Without loss of generality, we describe here the secure aggregation process for federated learning in one round.
The cloud server first randomly selects a fraction $\eta$ of the clients.
We write $\mathcal{T}$ to denote the set of selected clients.
This set $\mathcal{T}$ and the current global model vector $\mathbf{w}$ ($\mathbf{w}$ is an initialized model when it is the first round) is sent to the selected clients.
%
% Without loss of generality, we assume that the model parameters are embodied in the vector $\mathbf{w}^{j-1}$, and the $n$-th element is denoted by $\mathbf{w}^{j-1}(n)$.
%
%
Each selected client $\mathcal{C}_k$ ($k\in \mathcal{T}$) then performs local training and produces a model update $\mathbf{w}_k$.
An obscured model update is then generated based on blinding factors.
In particular, it first generates a blinding factor for each element in the model update vector $\mathbf{w}_k$ and produces a vector $\mathbf{r}_k$ of blinding factors.

The basic idea for supporting correct aggregation over obscured model updates is that if a client adds randomness to its input for blinding while another client subtracts that randomness from its input, the randomness will be canceled out when the summation of clients' obscured inputs is formed. 
In particular, each blinding factor $\mathbf{r}_k(b)$ for the $b$-th element in $\mathbf{w}_k$ is generated as $\mathbf{r}_k(b)$$=$$\sum\nolimits_{n \in\mathcal{T},n \ne k} { (-1)^{k>n} \mathsf{H}(CK_{k,n}||b||t)}$, where $(-1)^{k>n} $$=$$-1$ if $k$$>$$ n$ and $1$ otherwise, and $t$ is a round counter.
It can be observed that the sum of all blinding factors $\sum\nolimits_{k \in \mathcal{T}} {\mathbf{r}_k(b)}$ is $0$ as the blinding factors are canceled out when the sum over the set $\mathcal{T}$ of selected clients is formed.

Given above, with a model update $\mathbf{w}_k$ where each element is assumed to lie in a message space $M$ (say $M=2^{32}$), each client $\mathcal{C}_k$ generates a vector $\mathbf{r}_k$ of blinding factors as introduced above, and computes $\mathbf{w}'_k=\mathbf{w}_k+\mathbf{r}_k \bmod M$, where the modulo operation is performed element-wise.
Upon receiving the obscured model updates, the cloud server computes $\sum\nolimits_{k\in\mathcal{T}} {\mathbf{w}'_k} \bmod M=\sum\nolimits_{k\in\mathcal{T}} {(\mathbf{w}_k+\mathbf{r}_k)} \bmod M$ and obtains $\sum\nolimits_{k\in\mathcal{T}} {\mathbf{w}_k} \bmod M$ as $\sum\nolimits_{k \in \mathcal{T}} {\mathbf{r}_k}=0$.

\subsection{Security Guarantees}

We now provide security analysis for our above design below.
\begin{definition}

(Computational Diffie-Hellman (CDH) Problem). Consider a cyclic group $\mathbb{G}$ of prime order $p$ with generator $g$. The CDH problem is hard if, for any probabilistic polynomial time algorithm $\mathcal{A}$ and random $a$ and $b$ drawn from $\mathbb{Z}_p$: $\text{Pr} [\mathcal{A}(\mathbb{G};p; g; g^a; g^b) = g^{ab}]$ is negligible.
\end{definition}
\begin{theorem}

Given the hardness of the CDH problem, our system ensures that the cloud server only learns the aggregate model update without knowing individual model updates. In case of collusion between the cloud server and a subset of clients, the model updates of honest clients are still protected.
\end{theorem}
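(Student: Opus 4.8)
The plan is to give a simulation-based confidentiality argument in the random oracle model for $\mathsf{H}$, reducing to the hardness of CDH. I would first fix the adversary's view: in the collusion setting let $\mathcal{H}\subseteq\mathcal{T}$ be the honest selected clients and $\mathcal{C}=\mathcal{T}\setminus\mathcal{H}$ the corrupted ones; the adversary (the cloud server together with $\mathcal{C}$) sees all public keys, all obscured updates $\{\mathbf{w}'_k\}_{k\in\mathcal{T}}$, the secret keys $\{x_k\}_{k\in\mathcal{C}}$, and controls the random oracle. The server-only claim is the special case $\mathcal{C}=\emptyset$. The goal is to exhibit a simulator that, given only the honest aggregate $\sum_{k\in\mathcal{H}}\mathbf{w}_k\bmod M$ (which the adversary can anyway infer from the released aggregate together with its own inputs), produces a view computationally indistinguishable from the real one.

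The first key step is a key-indistinguishability lemma: for every honest-honest pair $i,j\in\mathcal{H}$, the shared value $CK_{i,j}=\mathsf{H}(g^{x_ix_j})$ is computationally indistinguishable from a fresh uniform element of $\mathbb{Z}_p$. Since the adversary holds no secret key from which $g^{x_ix_j}$ could be derived, the only way to learn $CK_{i,j}$ is to query the oracle at the point $g^{x_ix_j}$; I would convert any distinguisher into a CDH solver by embedding a CDH instance $(g^a,g^b)$ into $PK_i,PK_j$, answering oracle queries by lazy sampling, and outputting a random element among the adversary's oracle queries as the candidate $g^{ab}$. A hybrid over the polynomially many honest-honest pairs, with a guess of the pair (and query) on which the distinguishing advantage concentrates, handles multiple honest clients and adaptive queries.

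Granting this, the second step argues that the masks hide the honest inputs. The adversary can recompute every pairwise term $\mathsf{H}(CK_{k,n}\|b\|t)$ having at least one endpoint in $\mathcal{C}$, so after subtracting these known offsets the residual randomness in $\{\mathbf{r}_k\}_{k\in\mathcal{H}}$ comes solely from the honest-honest edges, whose hash outputs are (by the previous step) uniform in $\mathbb{Z}_p$ and hence, since $p\gg M$, statistically close to uniform and independent modulo $M$. These residual masks are the image of the signed incidence map of the complete graph on the honest vertices; as that graph is connected, the image is exactly the sum-zero subspace of $\mathbb{Z}_M^{|\mathcal{H}|}$, and I would verify this surjectivity directly over the non-prime ring $\mathbb{Z}_M=\mathbb{Z}_{2^{32}}$. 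Consequently $\{\mathbf{w}_k+\mathbf{r}_k\}_{k\in\mathcal{H}}$ is distributed as uniform vectors conditioned only on summing to $\sum_{k\in\mathcal{H}}\mathbf{w}_k$, which is precisely what the simulator samples; indistinguishability of the two views then follows by the standard hybrid swapping real oracle answers for random ones.

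I expect the main obstacle to be the CDH reduction with many honest clients and adaptive oracle queries, where the guessing step loses a factor equal to the number of honest-honest pairs; care is needed to keep this loss polynomial and to let the reduction recognize $g^{ab}$ among the queries without knowing $x_i,x_j$. A secondary subtlety is that the masking argument genuinely requires $|\mathcal{H}|\ge 2$, so that an honest-honest edge exists and the residual masks are nontrivial; with a single honest client the incidence map is trivial and no hiding holds, which matches the paper's explicit exclusion of the all-but-one-corrupted case.
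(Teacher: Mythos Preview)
Your proposal is correct and in fact considerably more rigorous than the paper's own argument. The paper's proof is short and informal: it simply observes that each mask is derived from keys $CK_{k,n}=\mathsf{H}(g^{x_kx_n})$, invokes CDH to say the server cannot compute $g^{x_kx_n}$ from the public keys, and in the collusion case splits each honest client's mask into an honest--corrupt part (which the adversary can strip) and an honest--honest part, concluding that what remains is ``a randomly masked model update.'' There is no explicit random-oracle model, no simulator, no discussion of what functionality is leaked, and no analysis of why the residual honest--honest masks hide everything except the honest sum.

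Your route fills precisely these gaps. You make the ROM explicit and give an actual CDH reduction (embedding a challenge into one honest--honest pair and extracting $g^{ab}$ from oracle queries), which is what the paper implicitly relies on but never states. More importantly, your second step---identifying the residual masks as the image of the signed incidence map of the complete graph on $\mathcal{H}$ and observing that this image is the sum-zero submodule of $\mathbb{Z}_M^{|\mathcal{H}|}$---is the substantive hiding argument that the paper skips entirely; it is what justifies the claim that only $\sum_{k\in\mathcal{H}}\mathbf{w}_k$ leaks, and your attention to the non-prime modulus $M=2^{32}$ and to the $|\mathcal{H}|\ge 2$ requirement is exactly right (the latter matches the paper's stated exclusion of the all-but-one case, though the paper never connects that assumption to the proof). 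The trade-off is clear: the paper's argument is brief and captures the intuition, while yours yields an actual simulation-based statement with a quantified security loss; your polynomial loss from guessing the honest--honest pair and oracle query is standard and acceptable here.
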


\begin{proof}

Each value in the model update of the client $\mathcal{C}_k$ is masked by a unique blinding factor generated by the secret keys $\{CK_{k,n}: \mathsf{H}((g^{x_n})^{x_k})\}$.
So we need to show that the cloud server is oblivious to the secret key $CK_{k,n}$.
In our system, the cloud server only has access to the public key $g^{x_i}$ of each client in $\mathcal{S}$, as per the system setup.
The CDH problem ensures that given $g^a$ and $g^b$, it is computationally hard to compute the value $g^{ab}$.
So, given access to the public keys $g^{x_k}$ and $g^{x_n}$, the cloud server is not able to infer the secret key $CK_{k,n}=\mathsf{H}((g^{x_n})^{x_k})$ for the generation of blinding factors.
Next, we analyze the case of passive collusion between a subset of selected clients with the cloud server, where the corrupted clients share all their secret materials with the cloud server.
%
% The confidentiality of the model updates provided by honest clients are still preserved.

Without loss of generality, we give the proof for a certain honest client denoted by $\mathcal{C}_i$.
Let $\mathcal{E}$ denote the set of clients participating in the same aggregation round $t$, $\mathcal{E}_{h}$ denote the set of honest clients, and $\mathcal{E}_{c}$ the subset of clients colluding with the cloud server.
Recall that in our system the data submitted by the honest client $\mathcal{C}_i$ is $\mathbf{w}_i(b)+\mathbf{r}_i(b)$$=$$\mathbf{w}_i(b)+\sum\nolimits_{n \in\mathcal{E},n \ne i} { (-1)^{i>n} \mathsf{H}(CK_{i,n}||b||t)}$, for the $b$-th element in $\mathbf{w}_i$.
Given $\mathcal{E}_h$ and $\mathcal{E}_c$, $\sum\nolimits_{n \in\mathcal{E},n \ne i} { (-1)^{i>n} \mathsf{H}(CK_{i,n}||b||t)}$ can be expressed as two parts: $\sum\nolimits_{n \in\mathcal{E}_{h},n \ne i} { (-1)^{i>n} \mathsf{H}(CK_{i,n}||b||t)}$ and $\sum\nolimits_{n \in\mathcal{E}_c} { (-1)^{i>n} \mathsf{H}(CK_{i,n}||b||t)}$.
So what the cloud server receives from the (honest) client is $\mathbf{w}_i+\sum\nolimits_{n \in\mathcal{E}_{h},n \ne i} { (-1)^{i>n} \mathsf{H}(CK_{i,n}||b||t)}+\sum\nolimits_{n \in\mathcal{E}_c} { (-1)^{i>n} \mathsf{H}(CK_{i,n}||b||t)}$.
As the set $\mathcal{E}_c$ of clients colludes with the cloud server, they are able to reveal to the cloud server the part $\sum\nolimits_{n \in\mathcal{E}_c} { (-1)^{i>n} \mathsf{H}(CK_{i,n}||b||t)}$.
So given the collusion with the set $\mathcal{E}_c$ of clients, the cloud server can obtain $\mathbf{w}_i+\sum\nolimits_{n \in\mathcal{E}_{h},n \ne i} { (-1)^{i>n} \mathsf{H}(CK_{i,n}||b||t)}$ ultimately.
It can be seen that the honest client $\mathcal{C}_i$'s model update is still protected by blinding factors generated from the mutual secret keys established between this honest client and other honest clients in $\mathcal{E}_h$.
That said, the cloud server still observes a randomly masked model update from the honest client $\mathcal{C}_i$.
This completes the proof.

\end{proof}

\subsection{Practical Considerations}

\noindent\textbf{Fault tolerance.}
It is likely that a selected client in a round may not actually participate (i.e., submitting a model update), due to reasons like poor network connections, energy constraints, or temporary unavailability.
We call such clients drop-out clients.
%
% In such case, secure aggregation should be only performed over the model updates of the responding selected clients with actual participation.
%
In such case, the cloud server would not be able to directly obtain the correct aggregation result, because the sum of the blinding factors is not zero.
 % for the clients who have submitted the masked model updates.
%
Therefore, our system should be able to handle the drop-out clients in a certain round and ensure that the obscured model updates of the responding clients can still be correctly aggregated.

We assume that the selected clients in a round which have managed to submit the (obscured) model updates will be able to stay online for potential assistance.
The rationale behind such assumption is that clients will typically train and participate when their devices are charging and on an unmetered network \cite{McMahanMRHA17,BonawitzEGHIIKK19}.
In practice, adequate incentive mechanisms could also be developed to further motivate clients' active participation.

To handle drop-out clients, the main insight is that the blinding factors of the responding clients should be eventually canceled out.
%
% Note that the blinding factor of each selected client is derived from all the selected clients in $\mathcal{T}$, so the difference in the sum introduced from non-responding clients should be precisely captured. 
%
Let $\mathcal{T}_{o}$ denote the set of responding clients who have submitted the (masked) model updates and are able to stay online for potential assistance, and $\mathcal{T}_{d}$ the set of drop-out clients who fail to submit the masked model updates.
At a high level, each online client $\mathcal{C}_k \in \mathcal{T}_{o}$ generates a vector consisting of blinding factors that are generated based on the shared key $\{CK_{k,n}\}_{n\in \mathcal{T}_{d}}$ with the set of drop-out clients.
This vector of blinding vectors allows the server to eliminate the blinding factors corresponding to the drop-out clients in each masked model update, and thus the aggregate model update.
We show in Algorithm \ref{algo:fault-tolerance} the fault tolerance mechanism for handling drop-out clients.

\noindent\textbf{Client dynamics.} 
After the initial system setup, new clients may join and some clients may be revoked later. 
%
% Supporting client dynamics is thus also an important practical requirement.
%
We now introduce how our system can support such client dynamics.
%
% For new clients, we need to make them able to participate in the secure aggregation process for the federated learning service.
%
When a new client $\mathcal{C}_i$ joins, it generates a secret key $x_i$ and uploads its public key $PK_i$$=$$g^{x_{i}}$ to the cloud server, which then broadcasts it to other current clients in the system.
Upon receiving $PK_i$, each client $\mathcal{C}_j$ computes $CK_{j,i}$$=$$(PK_i)^{x_j}$ to ensure that its blinding factor will correctly be generated for later use.
On another hand, when a client $\mathcal{C}_i$ leaves the system, the cloud server only needs to inform each client $\mathcal{C}_j$ to discard the key $CK_{j,i}$.

\begin{algorithm}[!t]
%\footnotesize
%\fontsize{8pt}{8pt}
\caption{Fault Tolerance against Client Drop-out}
\label{algo:fault-tolerance}
\begin{algorithmic}[1]

% \STATE \textbf{\underline{Initialization:}}

% \REQUIRE{Collected individual model updates $\{\mathbf{w}'_m\}_{m\in \mathcal{T}_{o}}$}
% \ENSURE{Correctly aggregated model update $\mathbf{w}'$}

\STATE Cloud server sends $\mathcal{T}_{d}$ to the clients in $\mathcal{T}_{o}$.

\FOR{each client $\mathcal{C}_k$ in $\mathcal{T}_{o}$}

\STATE Initialize a vector $\mathbf{q}_k$ of size $|\mathbf{w}'_k|$.
\FOR{the $b$-th element in $\mathbf{q}_k$}
\STATE $\mathbf{q}_k(b) $$\leftarrow$$ \sum\nolimits_{{n \in \mathcal{T}_{d}}}{  (-1)^{k>n} \mathsf{H}(CK_{k,n}||b||t)}\bmod M$.
\ENDFOR

\STATE Send $\mathbf{q}_k$ to the cloud server.
\ENDFOR

\STATE Cloud server computes $\mathbf{q}=\sum\nolimits_{k \in \mathcal{T}_{o} } {\mathbf{q}_k}\bmod M$.

\STATE Cloud server computes $(\sum\nolimits_{k \in \mathcal{T}_{o}}\mathbf{w}'_k)-\mathbf{q}\bmod M$.

\end{algorithmic}

\end{algorithm}

\noindent\textbf{Group management.}
In each round of federated learning, the client-side computation complexity in secure aggregation scales linearly in the size of the set $\mathcal{T}$, as generating the blinding factors requires each client to perform $O(|\mathcal{T}|)$ hashing operations. 
When $|\mathcal{T}|$ becomes quite large, it would be beneficial to improve the client side efficiency.
One immediate optimization is that in each round, the selected clients can be divided into groups with sizes smaller than $\mathcal{T}$.
During the process of secure aggregation, within-group aggregation is first performed at the cloud server, followed by cross-group aggregation.

Suppose that the selected clients are divided into $s$ groups with equal sizes, the computation complexity is reduced to $O(|\mathcal{T}|/s)$ for each client participating in a round of federated learning.
For client grouping, different guidelines might be adopted, such as geographical proximity. 
Note that group management also benefits fault tolerance. 
This is because each group is independent so the faults in a certain group will not affect other groups.
Only the clients in the same group will be involved to handle the faults.
The impact of faults on the overall system is thus effectively mitigated. 
One trade-off here is that the cloud server learns aggregated model updates from clients in independent groups. Yet, individual clients' model updates in each group are still protected. 
The group size can be set in advance according to the agreement between the cloud server and the clients.

\section{Leveraging Quantization for Boosted Communication Efficiency}
\label{sec:quantization design}

Communication could be a bottleneck for federated learning \cite{LiSTS20}, because the trained machine learning models are usually of large sizes, clients may have limited uplink bandwidth, and the number of clients could be large \cite{McMahanMRHA17}.
%
% So reducing the communication overhead is of critical importance.
%
As in each round a selected client communicates the model update to the cloud server, reducing the model update size could be highly beneficial to improve the communication efficiency of the whole system.
In the literature, researchers have studied various kinds of techniques for compressing model updates such as sparsification, subsampling, and quantization \cite{LiSTS20}.
Among these techniques, our observation is that quantization delicately represents the values in a model update as integers, exhibiting compatibility with cryptographic aggregation. 
Therefore, we choose to leverage the advancements in quantization to support secure aggregation while achieving a reduction in communication cost, and thus kill two birds with one stone.

\subsection{Quantization Technique}

Quantization based communication efficiency optimization for distributed machine learning has received considerable traction in recent years (e.g., \cite{ReisizadehMHJP20,AlistarhG0TV17,ShlezingerCEPC20}, to just list a few).
However, most of existing quantization techniques are not secure aggregation friendly, as they require a de-quantization operation, which demands computation much more complicated than secure aggregation, to be performed before doing the aggregation.
What we need here is thus a quantization scheme which can support aggregation directly over quantized values.

\begin{algorithm}[!t]
%\footnotesize
%\fontsize{8pt}{8pt}
\caption{Secure Federated Learning with Quantization}
\label{algo:fl-secagg-quant}
\begin{algorithmic}[1]

\STATE \textbf{\underline{Cloud server executes:}}
\STATE Initialize $\mathbf{w}^0$.

\FOR{each round $t=1,2,\cdots$}

  \STATE Select a fraction of the clients and produces the set $\mathcal{T}^{t}$.

  \FOR{each client $\mathcal{C}_k$ in $\mathcal{T}^{t}$} 
    \STATE $\mathbf{w'}^{t}_k$ $\leftarrow$ ClientUpdate($\mathbf{w}^{t-1}$,$\mathcal{T}^t$).
  \ENDFOR

  \STATE $\Delta^t \leftarrow \sum\nolimits_{k\in\mathcal{T}^t} {\mathbf{w'}^{t}_k} \bmod 2^r$.
  
  \STATE $\mathbf{w}^t=\mathbf{w}^{t-1}+ \frac{1}{|\mathcal{T}^{t}|}\cdot Q^{-1}_r(\Delta^t)$.
\ENDFOR

\vspace{10pt}

\STATE \textbf{ClientUpdate($\mathbf{w}^{t-1}$,$\mathcal{T}^t$):}

\vspace{2pt}

\STATE Produce a locally trained model $\mathbf{w}^t_k$ as in Algorithm 1 (steps 22-29).

\STATE $\Delta^t_k=\mathbf{w}^t_k-\mathbf{w}^{t-1}$.
\STATE Generate the vector $\mathbf{r}^t_k$ of blinding factors.

\STATE Compute $\mathbf{w'}^t_k\leftarrow Q_r(\Delta^t_k)+\mathbf{r}^t_k\bmod 2^r$.

\STATE Return $\mathbf{w'}^t_k$ to the cloud server.
\end{algorithmic}

\end{algorithm}

We make an observation that a newly developed quantization technique \cite{zhang2020batchcrypt} (originally tailored for homomorphic encryption) suits our purpose and has good compatibility with secure aggregation in our system.
This technique quantizes fractional values into $r$-bit signed integers.
In order to enable the values with opposite signs to be canceled out during summation, it proposes to make the quantized range symmetrical with respect to the range of the fractional values.  
Specifically, the $r$-bit quantizer $Q_r$ that we integrate in our system works as follows.
We start with introducing some notations.
For any scalar $v\in \mathbb{R}$, we write $\mathsf{sgn}(v)\in\{-1,1\}$ to denote the sign of $v$, with $\mathsf{sgn}(0) = 1$.
We write $\mathsf{abs}(v)$ to denote the absolute value of $v$, and $\mathsf{round}(v)$ to denote standard rounding over $v$.
Given a value $v$ in the range $[-B,B]$, the quantizer $Q_r$ quantizes it to an integer in $[-(2^{r-1}-1),2^{r-1}-1]$ via:
\begin{equation}
Q_r(v)=\mathsf{sgn}(v)\cdot \mathsf{round}(\mathsf{abs}(v)\cdot(2^{r-1}-1)/B)
\end{equation}

   % og_sign = np.sign(matrix)
   %  uns_matrix = matrix * og_sign
   %  print(uns_matrix)
   %  uns_result = np.round((uns_matrix * (pow(2, bit_width - 1) - 1.0) / alpha))
   %  result = (og_sign * uns_result)
   %  return result, og_sign

Given a quantized value $u=Q_r(v)$, the de-quantized value is computed as follows:
\begin{equation}
Q^{-1}_r(u)=\mathsf{sgn}(u)\cdot (\mathsf{abs}(u)\cdot B /(2^{r-1}-1))
\end{equation}

    % matrix = matrix.astype(int)
    % og_sign = np.sign(matrix)
    % uns_matrix = matrix * og_sign
    % uns_result = uns_matrix * alpha / (pow(2, bit_width - 1) - 1.0)
    % result = og_sign * uns_result
    % return result.astype(np.float32)

\subsection{Secure Federated Learning with Quantization}

We adapt the above quantization technique for our system to simultaneously achieve lightweight cryptographic aggregation and high communication efficiency.
Such delicate integration leads to our refined protocol for secure federated learning, which is shown in Algorithm \ref{algo:fl-secagg-quant}.
Compared with the process shown in Algorithm \ref{algo:sec-fl-algo}, the main difference is that in each round, after training a local model, each selected client sends an obscured quantized model update to the cloud server.
The cloud server then performs aggregation over these obscured quantized model updates and produces an updated global model.

Note that following prior works \cite{KonecnM16,ReisizadehMHJP20}, the quantization in our system is performed over the difference between a locally trained model $\mathbf{w}^t_k$ and the current global model $\mathbf{w}^{t-1}$, i.e., $\Delta^t_k=\mathbf{w}^t_k-\mathbf{w}^{t-1}$, as the model difference is more amenable to compression, in contrast to the locally trained model.
Note that the quantizer requires the input to be bounded in a range $[-B,B]$. This can be achieved by clipping the values in $\Delta^t_k$ based on the threshold $B$ which can be pre-set based on a public calibration dataset \cite{ShokriS15}.
Values greater than $B$ are set to $B$, and to $-B$ if they are smaller than $B$. 
In what follows, we further address some practical considerations.

\noindent\textbf{Preventing overflow in aggregating quantized values.} As the aggregation is performed over the quantized values from multiple clients, preventing overflow is crucially important. We note that this can be achieved via applying a scaling mechanism on the input range for quantization \cite{zhang2020batchcrypt}.
In particular, to prevent overflow when the aggregation is performed over $c$ clients, we can set the input range for quantization as $[-c\cdot B, c\cdot B]$, in contrast with the original range $[-B, B]$.
The intuition here is that by scaling the input range, each quantized value is scaled down $c$ times so overflow can be prevented when $c$ quantized values are aggregated.

\noindent \textbf{Identifying negative values in de-quantization.} It is noted that as the aggregate model update $\Delta^t$ is derived modulo $2^r$, the cloud server needs to know which values in the $\Delta^t$ should have been negative indeed before performing the de-quantization.
This can be achieved as follows.
The cloud server checks if a value $a$ in $\Delta^t$ is greater than $2^{r-1}-1$.
if yes, its raw value can be obtained via such transformation $a'=a-2^{r}$.
Otherwise, it should have been non-negative.
After this check and the transformation applied for values deemed to be negative, the cloud server can then proceed to the de-quantization.

\section{Hardening Secure Federated Learning}
\label{sec:integrity-assued-design}

In our design above, we assume a passively adversarial cloud server that faithfully performs the designated computation.
We now present a practical design which can provide security against an actively adversarial cloud server with minimal performance overhead (as demonstrated in the experiments later), ensuring the correctness of computation at the cloud server.
We aim for a pragmatic solution and thus following the trend of using trusted hardware, with only minimal assumption on its computation integrity assurance.

\begin{figure}[!t]
\centering

\fbox{
  \begin{minipage} [t]{0.46\textwidth}

% \textbf{Input:} Secret sharings $[\mathbf{M}]$ and $[\mathbf{x}]$.

% \textbf{Output:} Secret sharing $[\mathbf{x}_{\sigma}]=[\mathbf{M}\cdot \mathbf{x}]$.

\begin{enumerate}[1:]
\item Each client $\mathcal{C}_i$ generates the public key $g^{x_i}$ and signature $\sigma_{\mathcal{C}_i}=\mathsf{Sign}_{sk_{\mathcal{C}_i}}(g^{x_i})$.

\item  Each client $\mathcal{C}_i$ sends to $g^{x_i}$ and $\sigma_{\mathcal{C}_i}$ to $\mathcal{F}_{\mathsf{TEE}}$ as well as revealed to the cloud server.

\item The cloud server invokes $\mathcal{F}_{\msf{TEE}}$ on (``$\msf{Compute}$'', ($\{g^{x_i}\}_{i\in \mathcal{S}}$, $\{\sigma_{\mathcal{C}_i}\}_{i\in \mathcal{S}}$)), and receives $\perp$ or updated state containing $(\mathsf{ctr}, \{g^{x_j}\}_{j\in \mathcal{S}, j\neq i},\sigma)$ to be sent to each client $\mathcal{C}_i$, where $\sf{ctr}=0$.

\item Each client $\mathcal{C}_i$ runs $\mathsf{Verify}_{vk_{\msf{T}}}((\mathsf{ctr},\{g^{x_j}\}_{j\in \mathcal{S}, j\neq i}), \sigma)$, upon receiving $(\msf{ctr}, \{g^{x_j}\}_{j\in \mathcal{S}, j\neq i},\sigma)$.
\item Each client continues to produce the keys used for blinding factor generation if the check passes, or aborts otherwise.

\end{enumerate}
\end{minipage}
}
\caption{Key setup in security-hardened federated learning in our system.}
\label{fig:key-setup-tee-FL}
\end{figure}

\subsection{Abstract Functionality from Trusted Hardware}

Inspired by the prior work \cite{TramerZLHJS17,Kumar2020}, we define the abstract functionality $\mathcal{F}_{\sf{TEE}}$ assumed out of the trusted hardware, which we use to harden the secure federated learning service in our system.
The functionality is parameterized by a secure signing scheme $\{\mathsf{Sign}_{sk},\mathsf{Verify}_{vk}\}$ with a key pair $(sk,vk)$.
Let $\mathsf{Sign}_{sk}(m)$ denote signing a message $m$ and and $\mathsf{Verify}_{vk}(\sigma, m)$ denote verifying a signature $\sigma$ on $m$.
The functionality allows users to provide a program $\msf{prog}$ using the ``$\sf{install}$'' command.
It then returns $\alpha=\msf{Sign}_{sk}(\mathsf{Hash}(\sf{prog}))$ as a token for this program, which allows public integrity verification given the program $\sf{prog}$ and the signature verification key $vk$.
Subsequent invocation of $\mathcal{F}_{\sf{TEE}}$ runs $\sf{prog}$ on given inputs $\sf{inp}$ and fresh randomness $\sf{rnd}$ using the ``$\sf{Compute}$'' command, and produces some output $\sf{outp}$.
The program $\sf{prog}$ could be stateful and may receive successive inputs in different rounds as indexed by a counter $\sf{ctr}$ and produce corresponding output $\sf{outp}_{\sf{ctr}}$.

Let $\sf{state}_{\sf{ctr}}$ be an internal state maintained by $\mathcal{F}_{\sf{TEE}}$.
Upon the initialization, the $\sf{state}_{\sf{0}}$ is empty (i.e., $\sf{state}_{\sf{0}}=\epsilon$).
On each ``$\sf{Compute}$'' command for the $\sf{ctr}$-th round, the functionality produces the output $\sf{outp}_{\sf{ctr}}$ and a signature $\sigma_{\sf{ctr}}$ on $(\sf{outp}_{\sf{ctr}},\sf{ctr})$.
Then, $\sf{state}_{ctr-1}$ is updated to $\sf{state}_{ctr}$ by adding the tuple $\{\mathsf{ctr},\sf{inp}_{\sf{ctr}}, \sf{outp}_{\sf{ctr}}, \sigma_{\sf{ctr}},\sf{rnd}_{\sf{ctr}} \}$.
%
% the counter $ctr$, the input $\sf{inp}_{\sf{ctr}}$, the output $\sf{outp}_{\sf{ctr}}$, the signature $\sigma_{\sf{ctr}}$, and the randomness used $\sf{rnd}_{\sf{ctr}}$.
%
The updated state $\sf{state}_{ctr}$ is always given to the host of the trusted hardware. This reflects the assumption that the internal state of the trusted hardware can be observed by the host.
Note that for the output from the functionality, any parties with the public $vk$ can verify the integrity. 
In short, we only assume that the functionality can conduct computation and produce signed outputs.

\subsection{Security-Hardened Federated Learning}

Given the functionality $\mathcal{F}_{\sf{TEE}}$, we now describe how to make our above federated learning service secure against with the cloud server that may not correctly conduct the designated processing.
The main idea is to have every message sent by the (TEE-enabled) cloud server in the semi-honest protocol be computed by $\mathcal{F}_{\sf{TEE}}$, where the signing key pair $(sk_{\sf{T}},vk_{\sf{T}})$ is used by $\mathcal{F}_{\sf{TEE}}$.
These messages can be verified by any party which knows $vk_{\sf{T}}$.
We assume that every participant in the system knows $vk_{\sf{T}}$ in a reliable manner.
Later we will show how this can be achieved via the remote attestation mechanism in the widely popular trusted hardware Intel SGX.
We also assume that each client has a signing key pair $(sk_{\mathcal{C}_i}, pk_{\mathcal{C}_i})$, where the pubic key $pk_{\mathcal{C}_i}$ is bound to each client's identity $\mathcal{C}_i$.

\iffalse
\begin{algorithm}[!t]

\caption{Key Setup in Security-hardened Federated Learning}
\label{algo:fl-secagg-quant}
\begin{algorithmic}[1]

\STATE  Each client $\mathcal{C}_i$ generates the public key $g^{x_i}$ and signature $\sigma_{\mathcal{C}_i}=\mathsf{Sign}_{sk_{\mathcal{C}_i}}(g^{x_i})$.

\STATE  Each client $\mathcal{C}_i$ sends to $g^{x_i}$ and $\sigma_{\mathcal{C}_i}$ to $\mathcal{F}_{\mathsf{TEE}}$ as well as revealed to the cloud server.

\STATE The cloud server invokes $\mathcal{F}_{\msf{TEE}}$ on (``$\msf{Compute}$'', ($\{g^{x_i}\}_{i\in \mathcal{I}}$, $\{\sigma_{\mathcal{C}_i}\}_{i\in \mathcal{I}}$)), and receives $\perp$ or updated state containing $(\mathsf{ctr}, \{g^{x_j}\}_{j\in \mathcal{I}, j\neq i},\sigma)$ to be sent to each client $\mathcal{C}_i$, where $\sf{ctr}=0$.

\STATE Each client $\mathcal{C}_i$ runs $\mathsf{Verify}_{vk_{\msf{T}}}((\mathsf{ctr},\{g^{x_j}\}_{j\in \mathcal{I}, j\neq i}), \sigma)$ upon receiving $(\msf{ctr}, \{g^{x_j}\}_{j\in \mathcal{I}, j\neq i},\sigma)$.
%
\STATE Each client continues to produce the keys used for blinding factor generation if the check passes, or aborts otherwise.

\end{algorithmic}

\end{algorithm}
\fi

We now describe our protocol for security-hardened federated learning.
Firstly, the requester sends to the cloud server the code $\mathsf{prog}$.
 % (along with some static data such as the fraction $\eta$ of clients to be selected in each round).
%
The cloud server invokes $\mathcal{F}_{\mathsf{TEE}}$ on (``$\mathsf{Install}$ '', $\mathsf{prog}$) to receive $(\mathsf{state}_0, \alpha=\mathsf{Sign}_{sk_{\msf{T}}}(\mathsf{Hash}(\msf{prog})))$, which is also sent to the requester and all clients in the system.
The requester and each client run $\msf{Verify}_{vk_{\msf{T}}}(\alpha, \mathsf{Hash}(\msf{prog}))$, and abort if the check fails.
After $\mathcal{F}_{\mathsf{TEE}}$ is initialized, the key setup for secure aggregation is conducted as shown in Fig. \ref{fig:key-setup-tee-FL}.
After the key setup is done, secure federated learning now proceeds as shown in Fig. \ref{fig:secure-agg-TEE}.

\begin{figure}[!t]
\centering

\fbox{
  \begin{minipage} [t]{0.46\textwidth}

\begin{enumerate}[1:]
\item In the beginning of the round $\sf{ctr}$, the cloud server invokes $\mathcal{F}_{\msf{TEE}}$ on (``$\msf{Compute}$'', $\msf{ctr}$) and receives $(\msf{ctr}, \mathcal{T}_{\msf{ctr}}, \mathbf{w}^{\msf{ctr}-1}, \sigma)$ as the updated state, which is also sent to each client $\mathcal{C}_k$ in $\mathcal{T}_{\msf{ctr}}$.

\item Each selected client in $\mathcal{T}_{\msf{ctr}}$ runs $\sf{Verify}_{vk_{\sf{T}}}((\msf{ctr}, \mathcal{T}_{\msf{ctr}}, \mathbf{w}_{\msf{ctr}-1}), \sigma)$, and aborts if the verification fails.

\item Each selected client $\mathcal{C}_k$ does local training and produces an obscured (quantized) model update $\mathbf{w'}_k$. 

\item The ($\msf{ctr}$, $\mathcal{C}_k$,  $\mathbf{w'}_k$, $\sigma_{\mathcal{C}_k}$) is sent to $\mathcal{F}_{\msf{TEE}}$ as well as revealed to the cloud server.

\item The cloud server invokes $\mathcal{F}_{\msf{TEE}}$ on (``$\msf{Compute}$'', $\msf{ctr}$, $\{\mathcal{C}_k, \mathbf{w}'_k, \sigma_{\mathcal{C}_k}\}_{k\in \mathcal{T}_{\msf{ctr}}})$.
It receives $(\msf{ctr+1}, \mathcal{T}_{\msf{ctr+1}}, \mathbf{w}^{\msf{ctr}}, \sigma)$ for the next round, or $\perp$ if verification any selected clients' signatures fails.
\end{enumerate}
\end{minipage}
}
\caption{Aggregation in our security-hardened federated learning design.}
\label{fig:secure-agg-TEE}
\end{figure}

\iffalse
\begin{enumerate}
\item In the beginning of the $\sf{ctr}$-th ($\sf{ctr}\geq 1$) round, the cloud server invokes $\mathcal{F}_{\msf{TEE}}$ on (``$\msf{Compute}$'', $\msf{ctr}$) and receives $(\msf{ctr}, \mathcal{T}_{\msf{ctr}}, \mathbf{w}^{\msf{ctr}-1}, \sigma)$ as the updated state, which is also sent to each client $\mathcal{C}_m$ in $\mathcal{T}_{\msf{ctr}}$.

\item Each selected client runs $\sf{Verify}_{vk_{C}}((\msf{ctr}, \mathcal{T}_{\msf{ctr}}, \mathbf{w}_{\msf{ctr}-1}), \sigma)$, and aborts if check fails.

% \item After performing training over local dataset and producing the model update, each client $\mathcal{C}_m$ ($m\in \mathcal{T}_{\msf{ctr}}$) produced an obscured model update $\mathbf{w}'_m$ as in the secure aggregation protocol presented above.
\item Each selected client $\mathcal{C}_m$ performs local training and produces an obscured (quantized) model update $\mathbf{w'}_m$. Then, the ($\msf{ctr}$, $\mathcal{C}_m$,  $\mathbf{w'}_m$, $\sigma_m$) is sent to $\mathcal{F}_{\msf{TEE}}$ as well as revealed to the cloud server.
 % functionality $\mathcal{F}_{\msf{TEE}}$ as well as revealed to the cloud server.

\item The cloud server invokes $\mathcal{F}_{\msf{TEE}}$ on (``$\msf{Compute}$'', $\msf{ctr}$, $\{\mathcal{C}_m, \mathbf{w}'_m, \sigma_m\}_{m\in \mathcal{T}_{\msf{ctr}}})$.
%
It receives $(\msf{ctr+1}, \mathcal{T}_{\msf{ctr+1}}, \mathbf{w}^{\msf{ctr}}, \sigma)$ for the next round, or $\perp$ if any verification on the selected clients' signatures fails.
\end{enumerate}
\fi

\begin{figure*}[!t]
\centering
  \begin{minipage}[t]{0.31\linewidth}
    \centering{\includegraphics[width=\linewidth]{./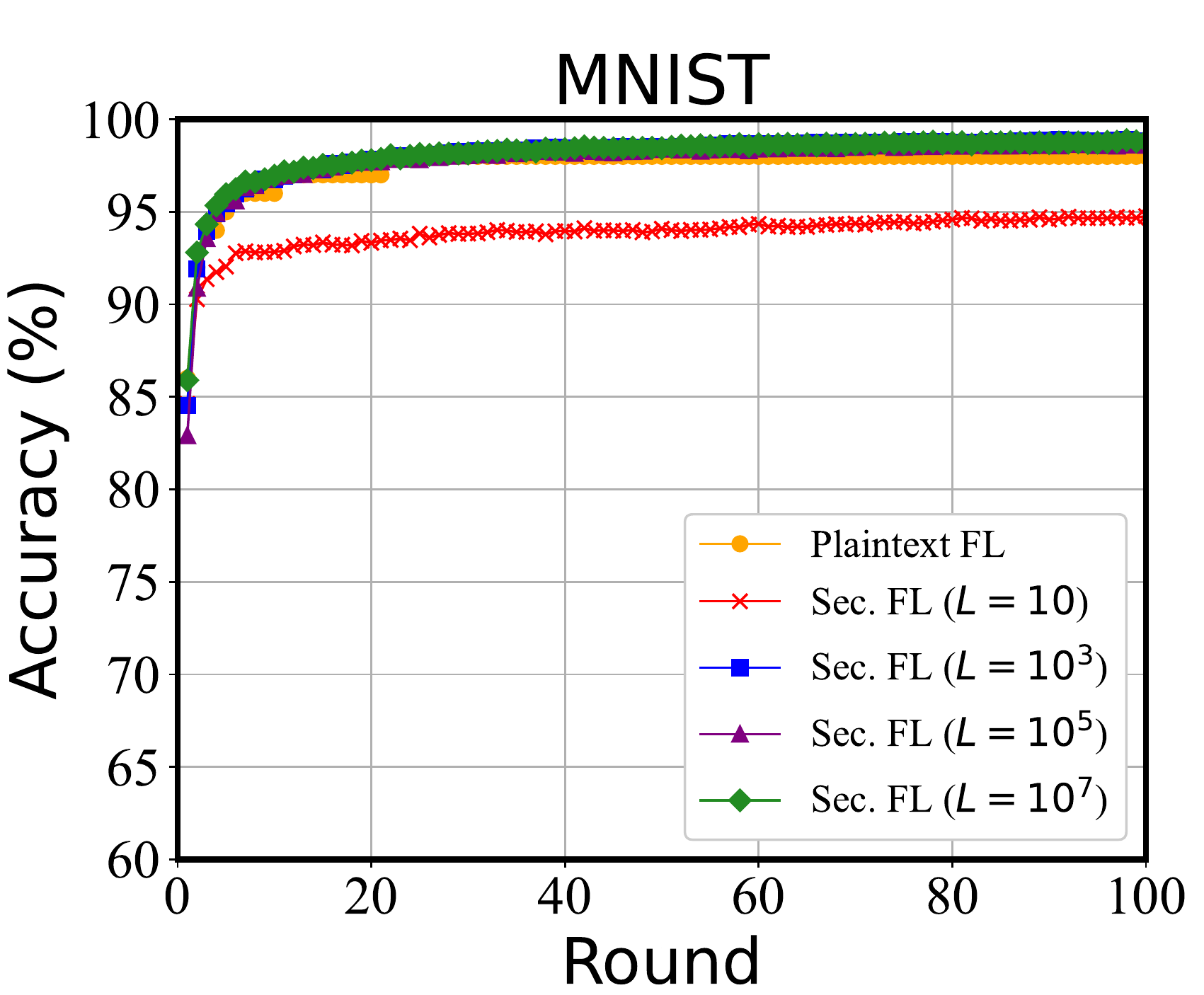}}
  
  \end{minipage}%
  \begin{minipage}[t]{0.31\linewidth}
    \centering{\includegraphics[width=\linewidth]{./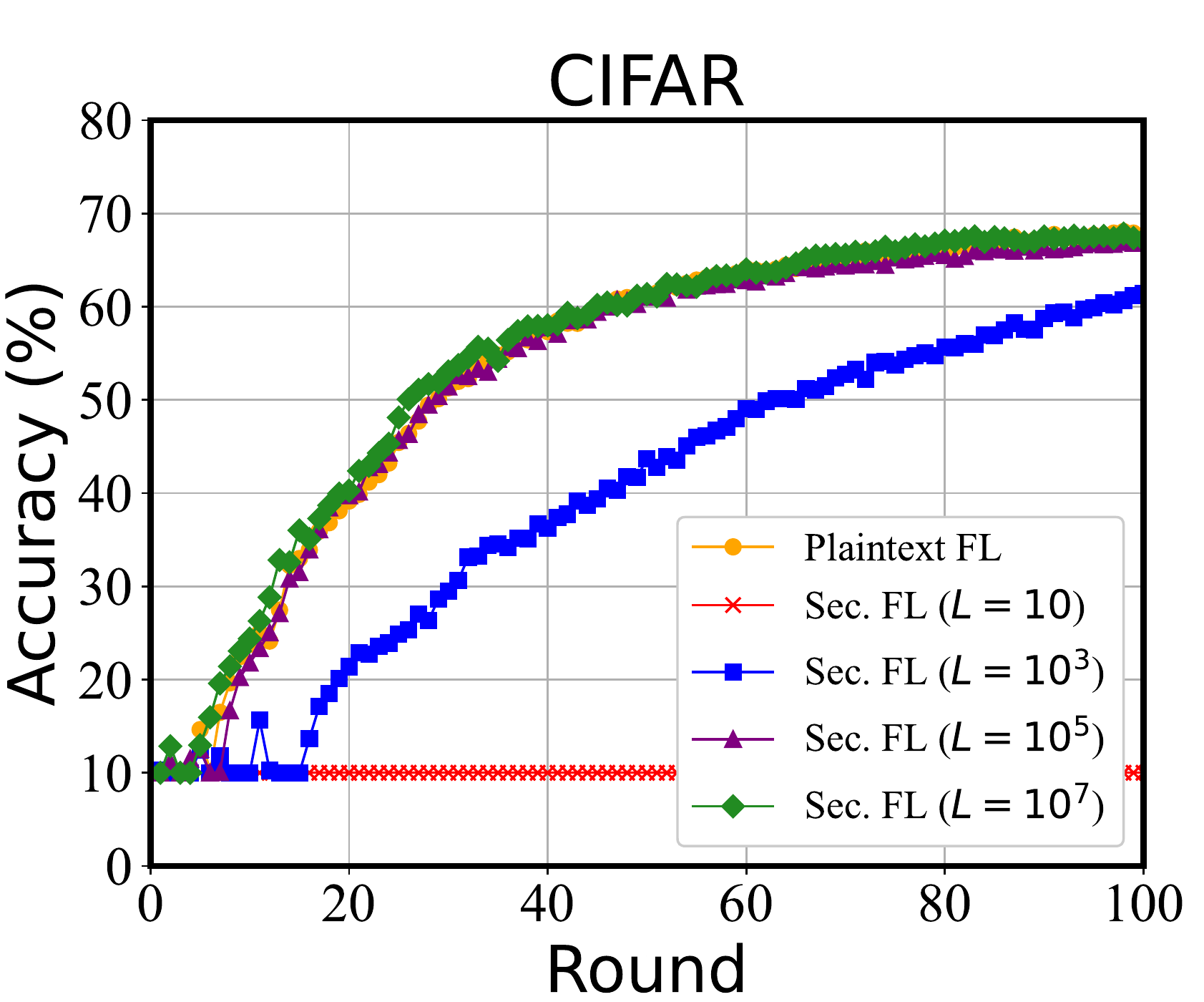}}
        
  \end{minipage}
  \begin{minipage}[t]{0.31\linewidth}
    \centering{\includegraphics[width=\linewidth]{./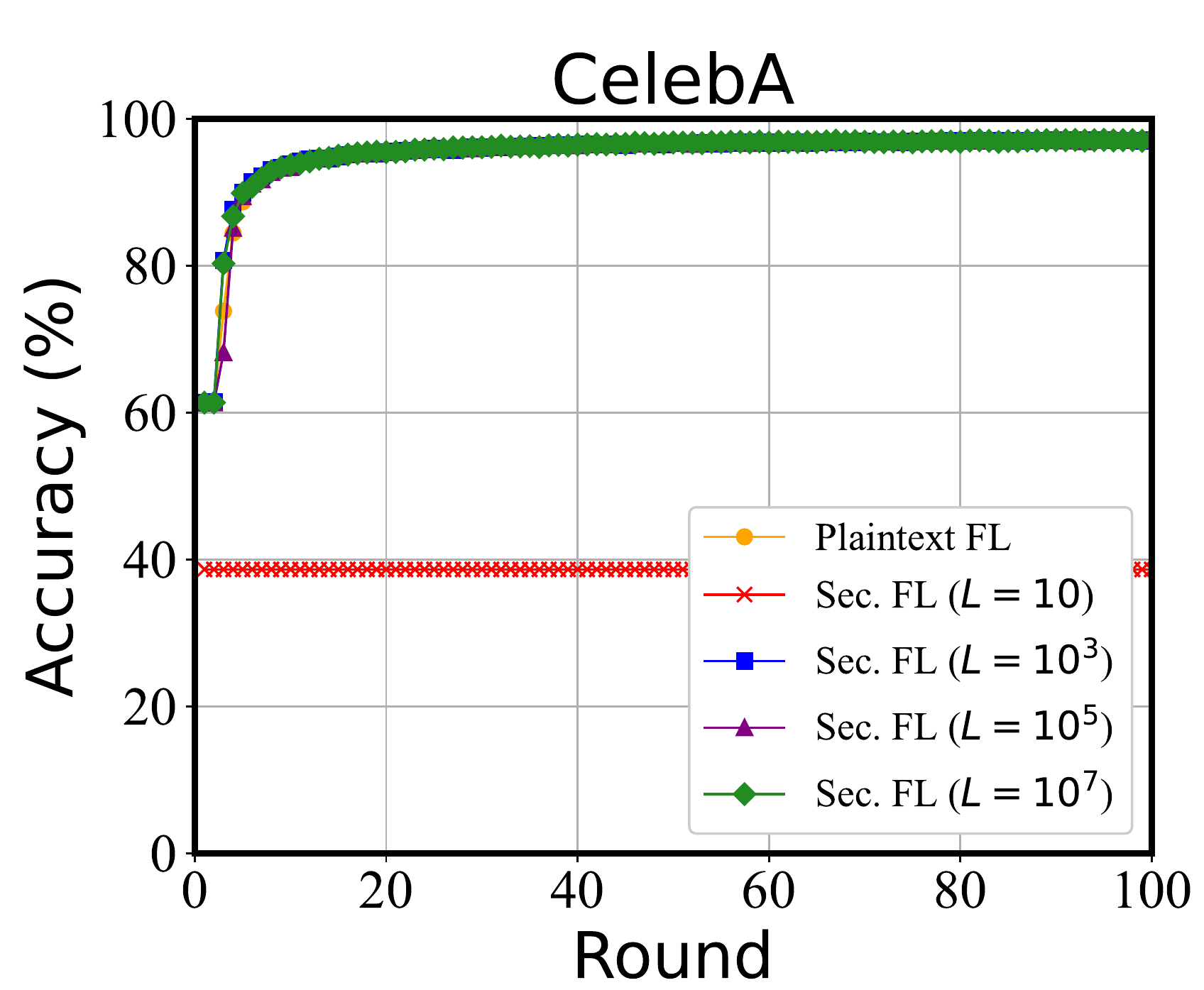}}
  \end{minipage}
  \caption{Effect of the scaling factor $L$ on accuracy in the proposed secure federated learning design over different datasets.}
  \label{fig:effect_scaling_factor}
\end{figure*}

\subsection{Realization of the Ideal Functionality $\mathcal{F}_{\sf{TEE}}$}

The above functionality $\mathcal{F}_{\sf{TEE}}$ can potentially be realized via any trusted hardware techniques that provide code attestation and signing.
As an instantiation, we resort to Intel SGX due to its wide integration in commodity processors.
SGX enables the execution of programs in secure enclaves that are isolated from all other applications on the same host.
It also provides attestation mechanisms which assure the integrity of the program loaded into the enclave once it has been attested. 
In particular, a signed attestation report can be generated for the program being loaded, which allows anyone to verify based on a public key corresponding to Intel's report signing key.
The signing key pair $(sk_{\sf{T}},vk_{\sf{T}})$ of the functionality $\mathcal{F}_{\sf{TEE}}$ can be generated inside the enclave, and the verification key $vk_{\sf{T}}$ can be part of the payload of the signed attestation report so that it is made public reliably.
This verification key $vk_{\sf{T}}$ can then be used to verify the signed outputs from the functionality throughout the computation procedure in federated learning.

\section{Experiments}
\label{sec:experiments}

\subsection{Setup}

% We make an implementation for performance evaluation.
%
% We implement a system prototype to evaluate the performance.
%
% The prototype consists of two parts: the client handling local models and the federated learning service running in the enclave.
%

We implement the client with Python.
The PyTorch is used for model training.
% The PyTorch~\cite{pythorch} is used for model training.
%
% On the client side, we leverage Python to support learning functionality.
%
% This allows us to employ PyTorch~\cite{pythorch}, which is flexible to support different models, and it ensure the usability of our system.
%
% We also use Python to implement the cryptographic module (i.e., key/blinding mask generation and masking) on the client as it can be readily embedded into the the learning module.
%
For the federated learning service, we use C++ since the official SGX SDK only has C++ interfaces.
We adopt Thrift~\cite{thrift} to enable cross-language communication.
% between the Python client and C++ service.
%
That is, the federated learning service is implemented as a daemon on a Thrift server that provides the corresponding interface which can be invoked by a service client generated as Python scripts.
Each client uses the service client to connect with the server to upload the model update.
%
% After updating the local model, the client calls the service client APIs to send the model update to the server.
%
The daemon passes the model update to the enclave and gets the global model and its signature as the result.
In our experiments, we manage to get rid of expensive EPC paging in the enclave by temporarily storing  all received (obscured) model updates and their signatures in the untrusted memory of the host server.
Aggregation in the enclave is then performed by reading in and verifying the (obscured) model updates one by one.

We use three popular datasets in our experiments, including the commonly used datasets MNIST and CIFAR-10, and the CelebA dataset from the LEAF federated learning benchmark framework \cite{Caldas19}.
The MNIST dataset contains images of 0-9 handwritten digits, with 60000 training examples and 10000 testing examples. 
The CIFAR-10 dataset contains 50000 training color images and 10000 testing color images, with 10 classes.
The CelebA dataset contains 200,288 celebrity images, each with 40 attribute annotations.
In this paper we use this dataset for the application of gender classification.
For MNIST, we use a relatively simple CNN model with two 5x5 convolution layers, a dropout layer, and two fully connected layers ($21,840$ total parameters).
For CIFAR-10, we rely on the popular AlexNet \cite{KrizhevskySH12} with more sophisticated structures and larger sizes but use less conv. kernels ($23,272,266$ total parameters).
For CelebA, we use ResNet-18 \cite{HeZRS16}, a popular CNN model that is 18 layers deep ($13,962,562$ total parameters).
Hereafter, to facilitate presentation, we will refer to the models over different datasets as the MNIST model, CIFAR model, and CelebA model, respectively.
All evaluations are conducted on an SGX-enabled server equipped with Intel Xeon E 2288G 3.70GHz CPU (8 cores), 128GB RAM, and 3 RTX 3090 GPUs.

\begin{figure*}[!t]
\centering
\small
  \begin{minipage}[t]{0.3\linewidth}
    \centering{\includegraphics[width=\linewidth]{./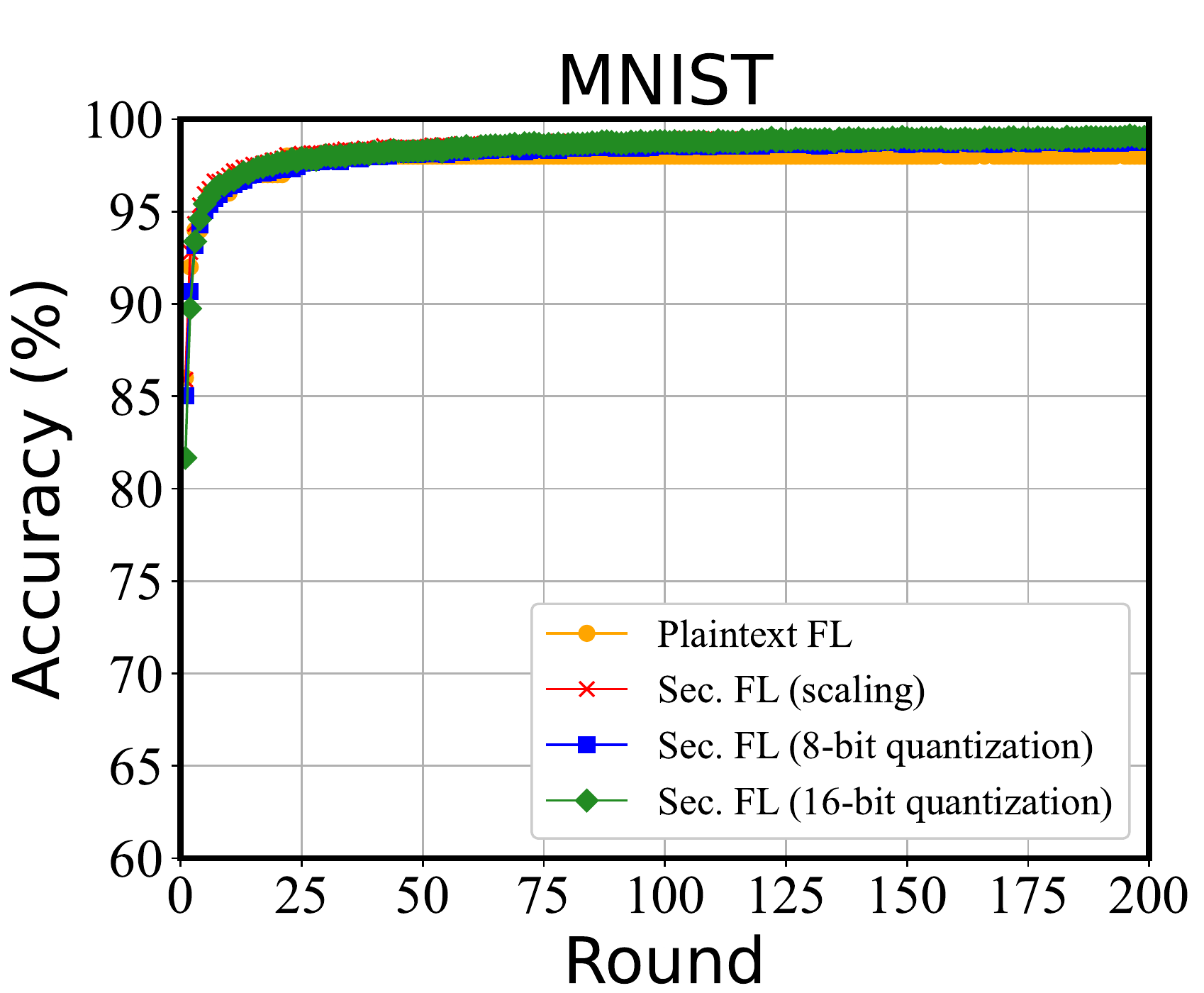}}

  \end{minipage}%
  \begin{minipage}[t]{0.3\linewidth}
    \centering{\includegraphics[width=\linewidth]{./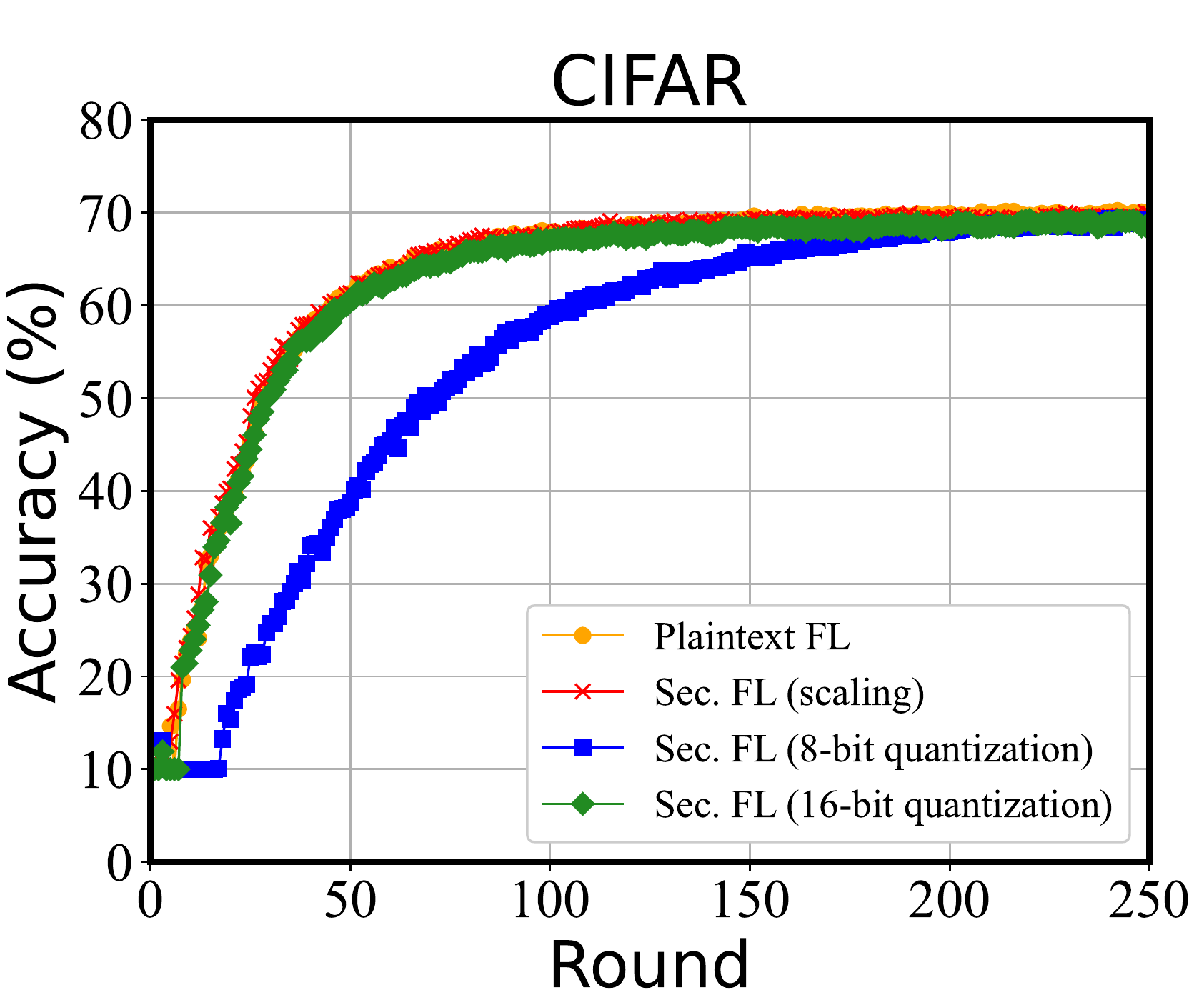}}
  \end{minipage}
    \begin{minipage}[t]{0.3\linewidth}
    \centering{\includegraphics[width=\linewidth]{./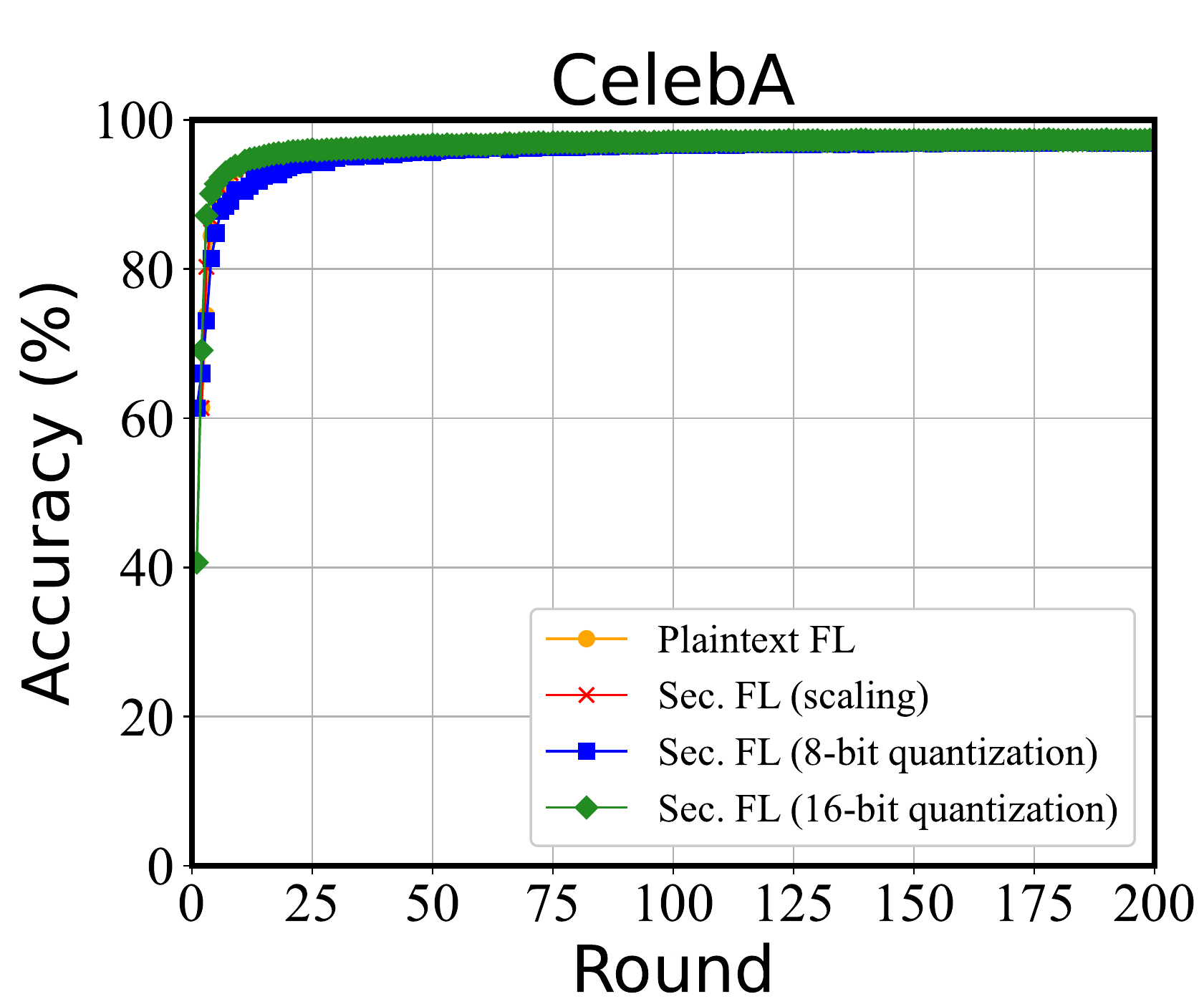}}
  \end{minipage}
             \caption{Accuracy evolution of the models over different datasets.}
    \label{fig:accuracy_evolution}
\end{figure*}

  \begin{figure}[t]
    \centering{\includegraphics[width=0.6\linewidth]{./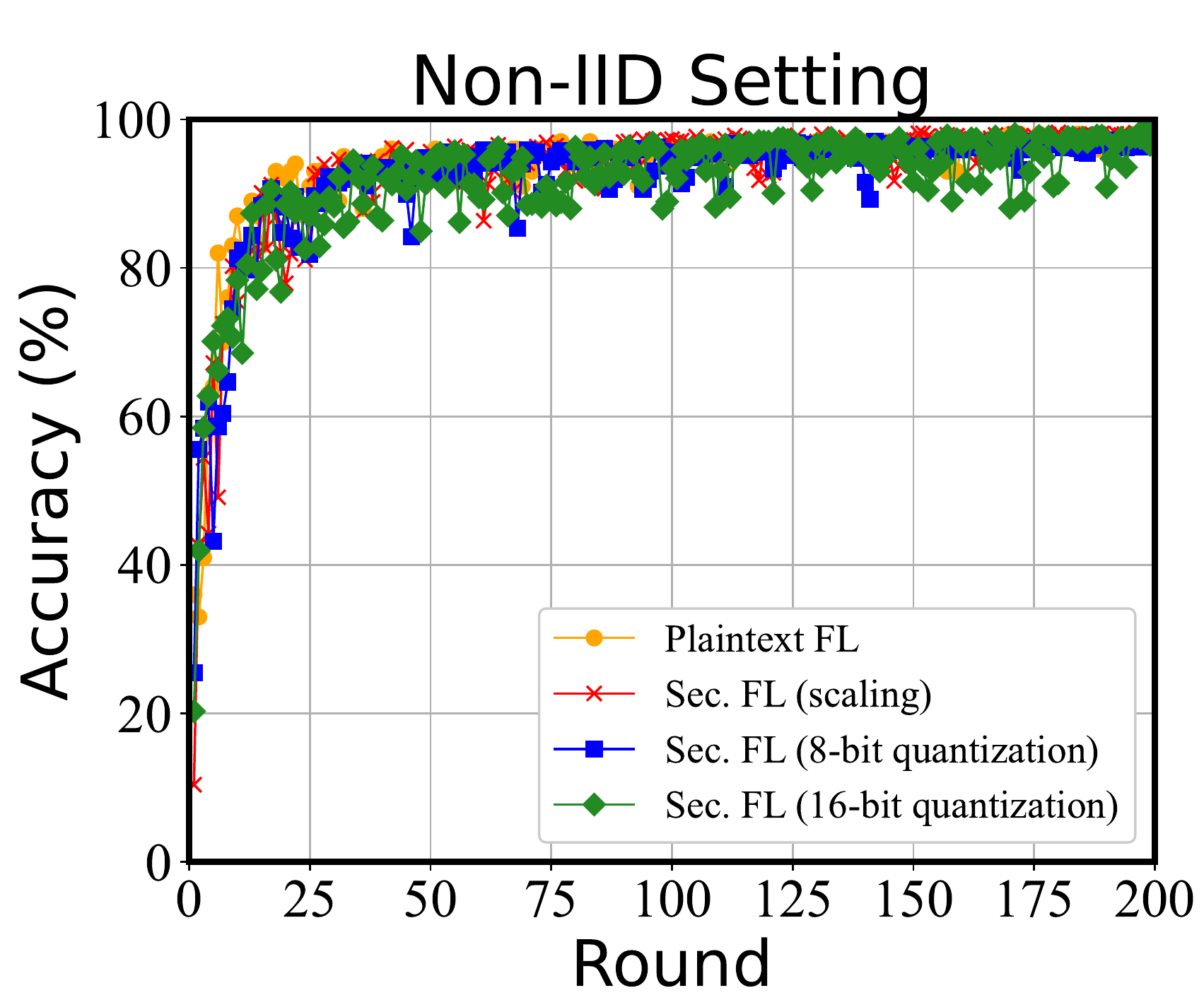}}
    \caption{MNIST model accuracy evolution under the non-IID setting.}
    \label{fig:accuracy_result_noniid}
  \end{figure}

\begin{figure*}[!t]
\centering
\small
  \begin{minipage}[t]{0.3\linewidth}
    \centering{\includegraphics[width=\linewidth]{./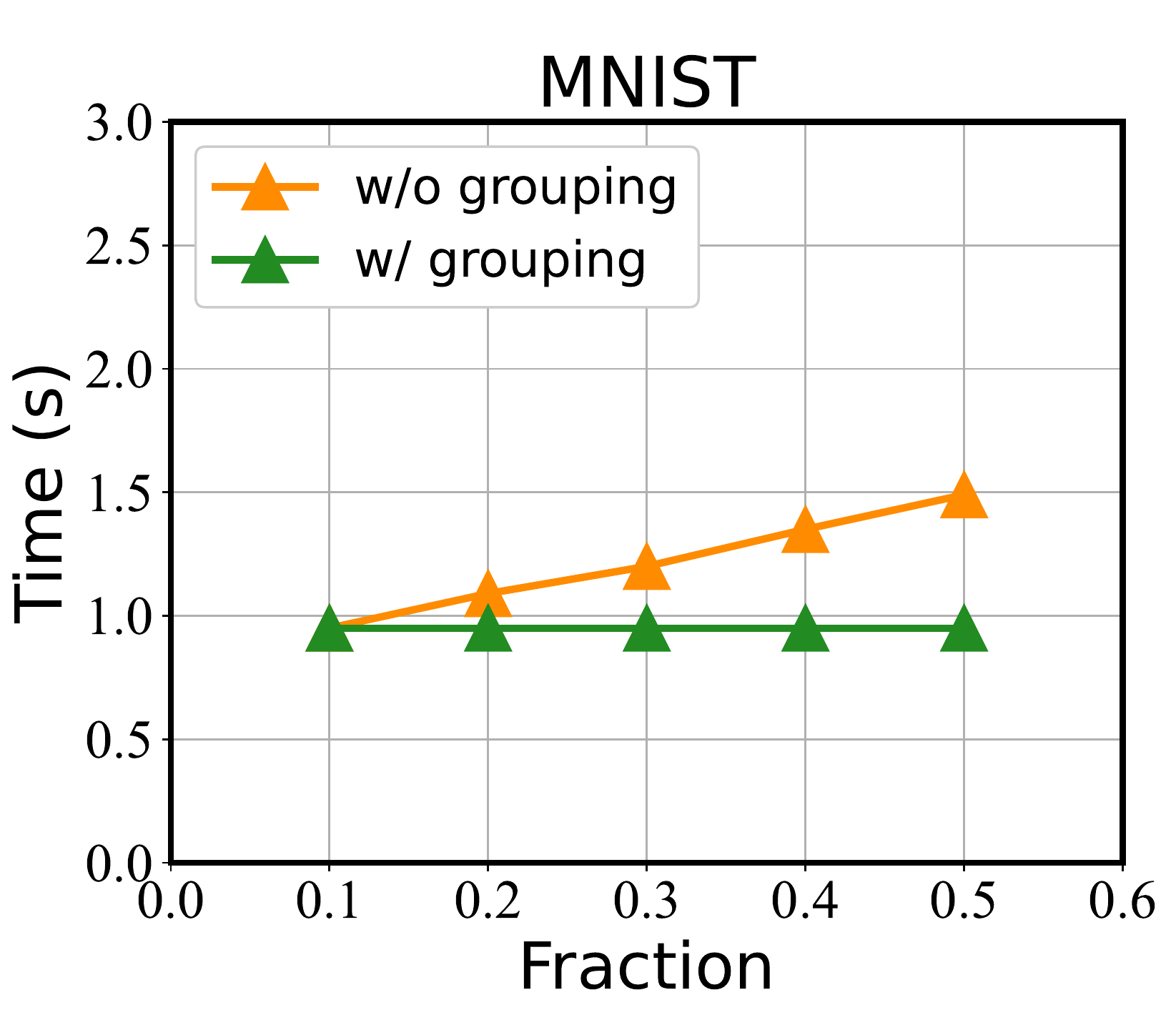}}
    
  \end{minipage}%
  \begin{minipage}[t]{0.3\linewidth}
    \centering{\includegraphics[width=\linewidth]{./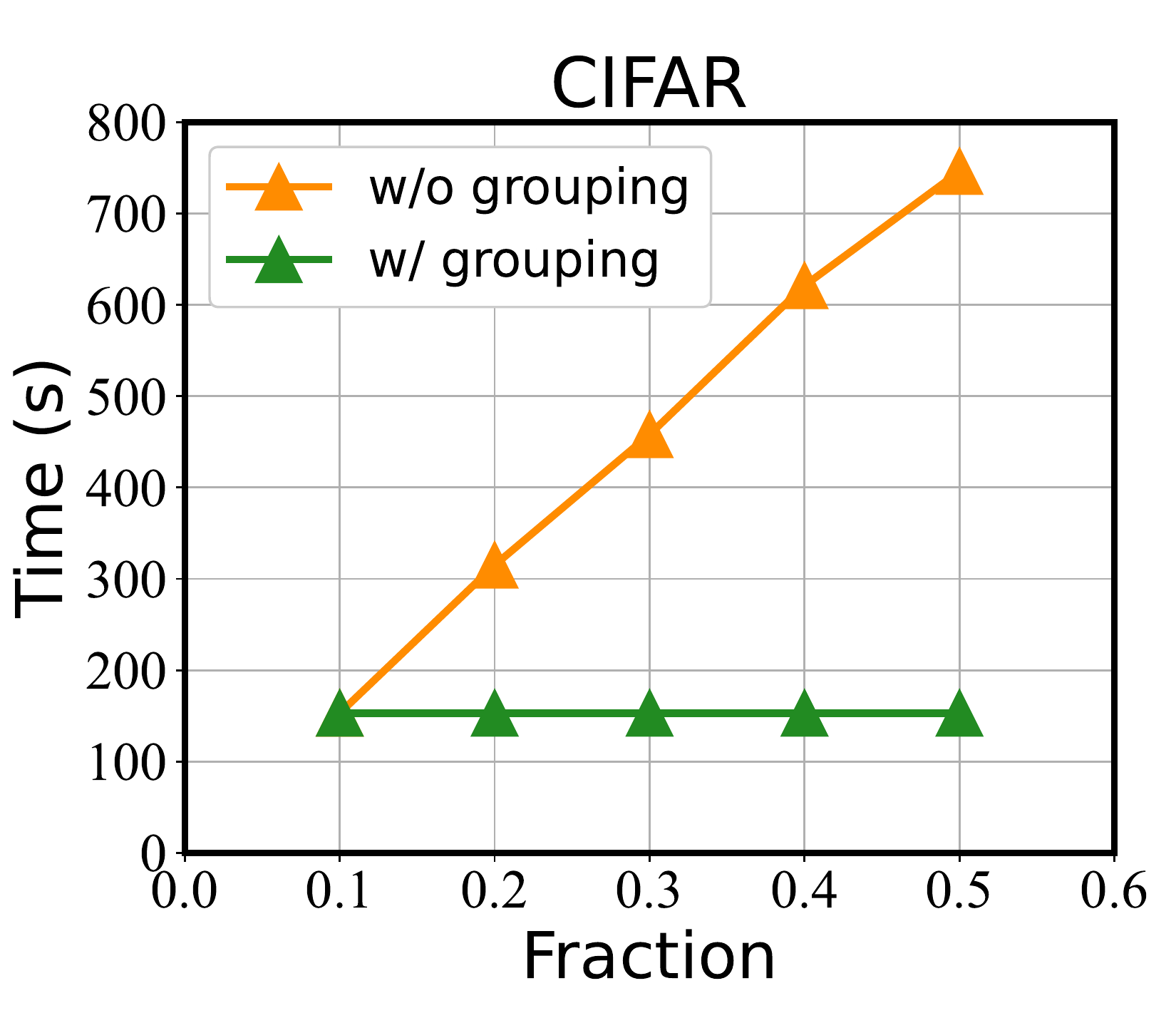}}

  \end{minipage}
    \begin{minipage}[t]{0.3\linewidth}
    \centering{\includegraphics[width=\linewidth]{./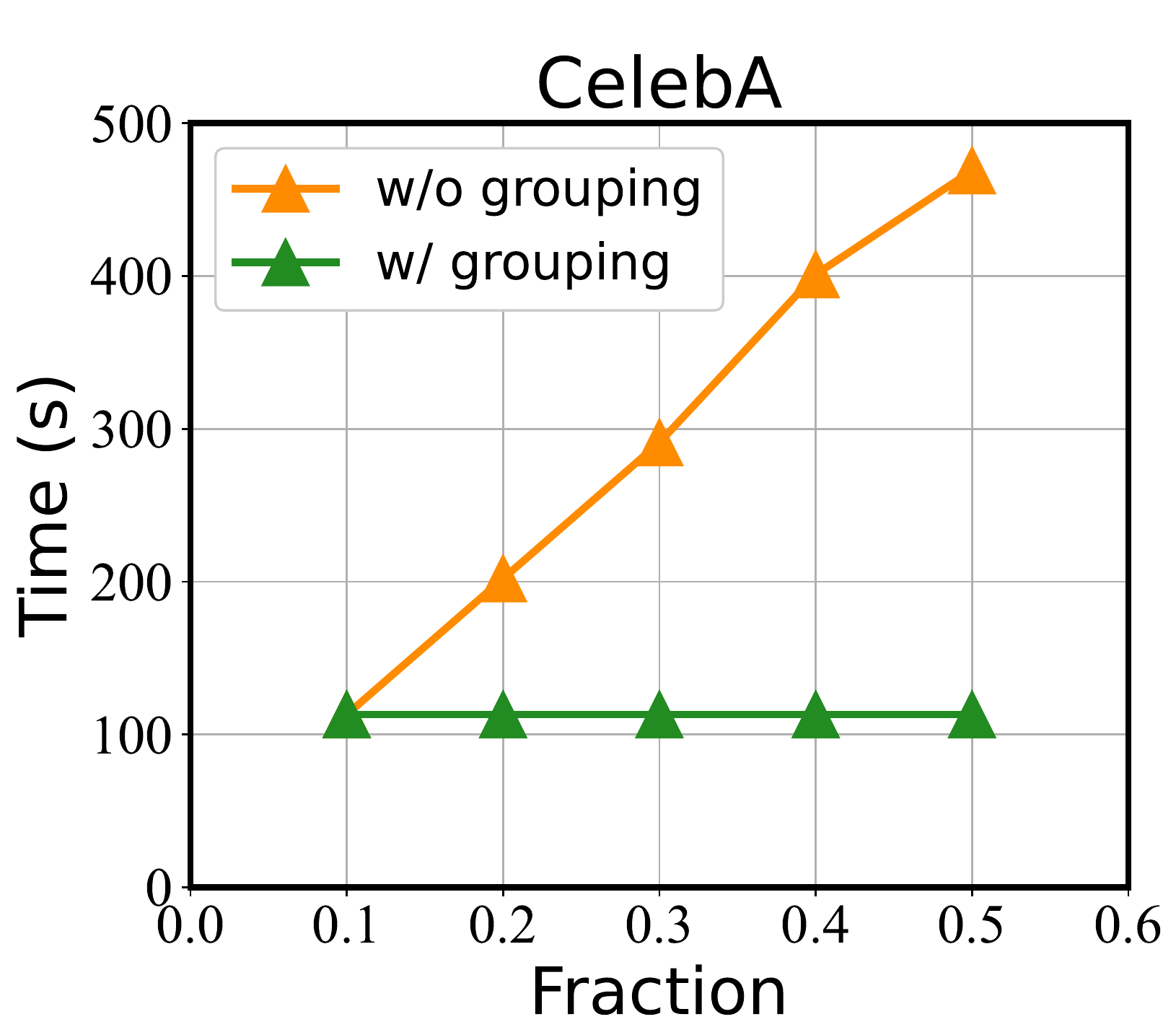}}
  \end{minipage}
  \caption{Client's encryption cost with varying fraction of selected clients per round, over different models.}
    \label{fig:client_encryption_cost}
\end{figure*}

\subsection{Accuracy Evaluation}
We evaluate three cases: the plaintext federated learning with no privacy of model updates, our basic secure protocol without quantization, where a scaling factor is used to scale model parameters into integers for cryptographic computation, and our secure protocol extended with quantization. 

For the datasets MNIST and CIFAR, we randomly shuffle the training examples and evenly distribute them across $100$ clients. 
Each client receives $600$ examples under the MNIST dataset, and $500$ examples under the CIFAR-10 dataset respectively.
For CelebA, we divide it into a training set and a test set with a 70/30 ratio, and distribute the training set to the clients, and each client holds about 2002 images.
Such way of partitioning is referred to as IID data distribution \cite{McMahanMRHA17}.
The fraction of clients being selected in each round is set to $10\%$.
Each selected client performs local training over $5$ epochs, with the learning rate being $0.01$ and batch size being $10$.
The threshold $B$ related with quantization is set to $0.5$ for the MNIST model and $0.1$ for the CIFAR model and the CelebA model, respectively.
%
% For our basic secure protocol, the scaling factor is set to $10^6$.
%
For our secure protocol extended with quantization, we examine the cases where the quantization bit widths are $8$ and $16$ respectively.

To start, we evaluate the effect of varying scaling factors on the model accuracy in our secure federated learning design, of which the results are shown in Fig. \ref{fig:effect_scaling_factor}.
It is observed that the use of an appropriate scaling factor does not adversely affect the accuracy as the number of rounds grows. As long as a large scaling factor is used, the model accuracy is maintained with respect to the plaintext baseline. Such accuracy preservation is also consistent with the literature that uses the trick of scaling factor for cryptographic computation (e.g., \cite{DBLP:journals/tdsc/ZhengDY020,LiuJLA17,ZhengDW18}, to just list a few).
Hereafter, we set the scaling factor to $10^7$ in all the remaining experiments.

In Fig. \ref{fig:accuracy_evolution}, we show the evolution of the testing accuracy under varying number of rounds over the MNIST dataset, CIFAR dataset, and CelebA dataset, respectively.
It is noted that the legend ``Plaintext FL'' refers to the plaintext federated learning setting where the raw values of model updates in 32-bit floating-point representation are used.
The legend ``scaling'' refers to the setting where a scaling factor ($10^7$) is used to scale up the fractional values in model updates into 32-bit integers to support computation in the cryptographic aggregation protocol.
The legend ``8/16-bit quantization'' refers to the setting where a 8-/16-bit quantizer is applied over the fractional values in model updates, leading to 8-/16-bit integers for supporting communication efficiency optimization in the ciphertext domain.
It can be observed that our secure protocols share similar behavior with the baseline and achieve comparable accuracy.
The unique integration of quantization (with adequate bit width) does not adversely affect the quality of the trained model either.

In addition to the IID setting, following the seminal work \cite{McMahanMRHA17} on federated learning, we further examine the non-IID setting over the MNIST dataset.
The non-IID data distribution is set up by sorting the training images according to the digit labels, dividing them into shards of size $300$, and randomly distributing two shards to each client.
%
% So most clients get images of 2 digits. 
%
We show the accuracy evaluation result in Fig. \ref{fig:accuracy_result_noniid}, which demonstrates similar behavior and comparable accuracy to the plaintext baseline.
Note that data heterogeneity is related to the federated learning paradigm itself and is independent of our security design.
There are no ties between data heterogeneity and client drop-out.
Client-dropout only affects the actual number of model updates being aggregated to produce the global model. Our system ensures the correctness of secure aggregation even in case of client-dropout.

\iffalse
\begin{figure*}[!t]
\centering
\small
  \begin{minipage}[t]{0.39\linewidth}
    \centering{\includegraphics[width=\linewidth]{./client_recovery_cifar}}
    \caption{Client's computation cost in dealing with different dropout rates.}
    \label{fig:client_recovery_cifar}
  \end{minipage}
  \begin{minipage}[t]{0.45\linewidth}
    \centering{\includegraphics[width=\linewidth]{}}
        \caption{Size of model updates.}
    \label{fig:client_model_update_size}
  \end{minipage}
\end{figure*}
\fi

\begin{figure*}[!t]
\centering
\small
  \begin{minipage}[t]{0.3\linewidth}
    \centering{\includegraphics[width=\linewidth]{./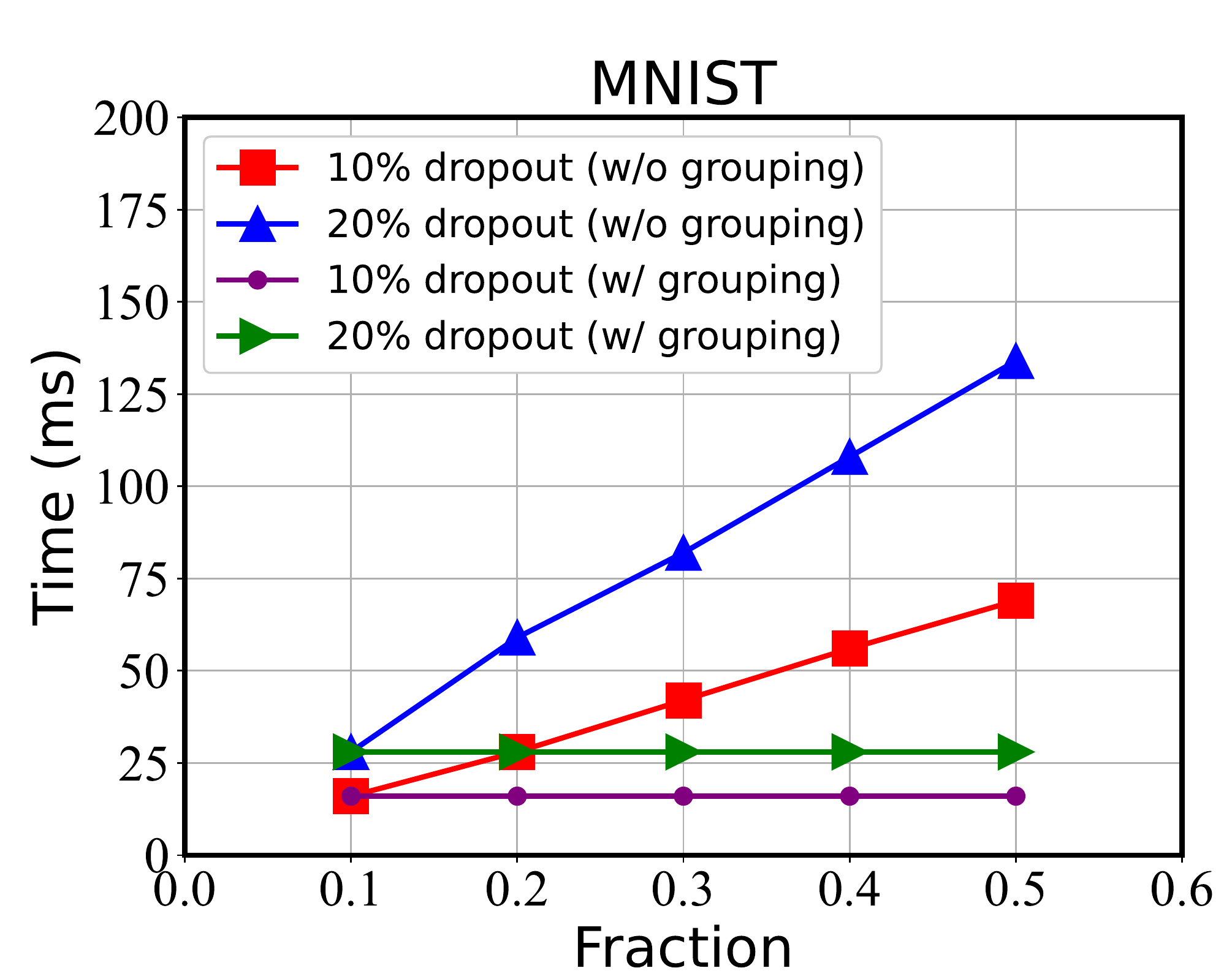}}
    
  \end{minipage}%
  \begin{minipage}[t]{0.3\linewidth}
    \centering{\includegraphics[width=\linewidth]{./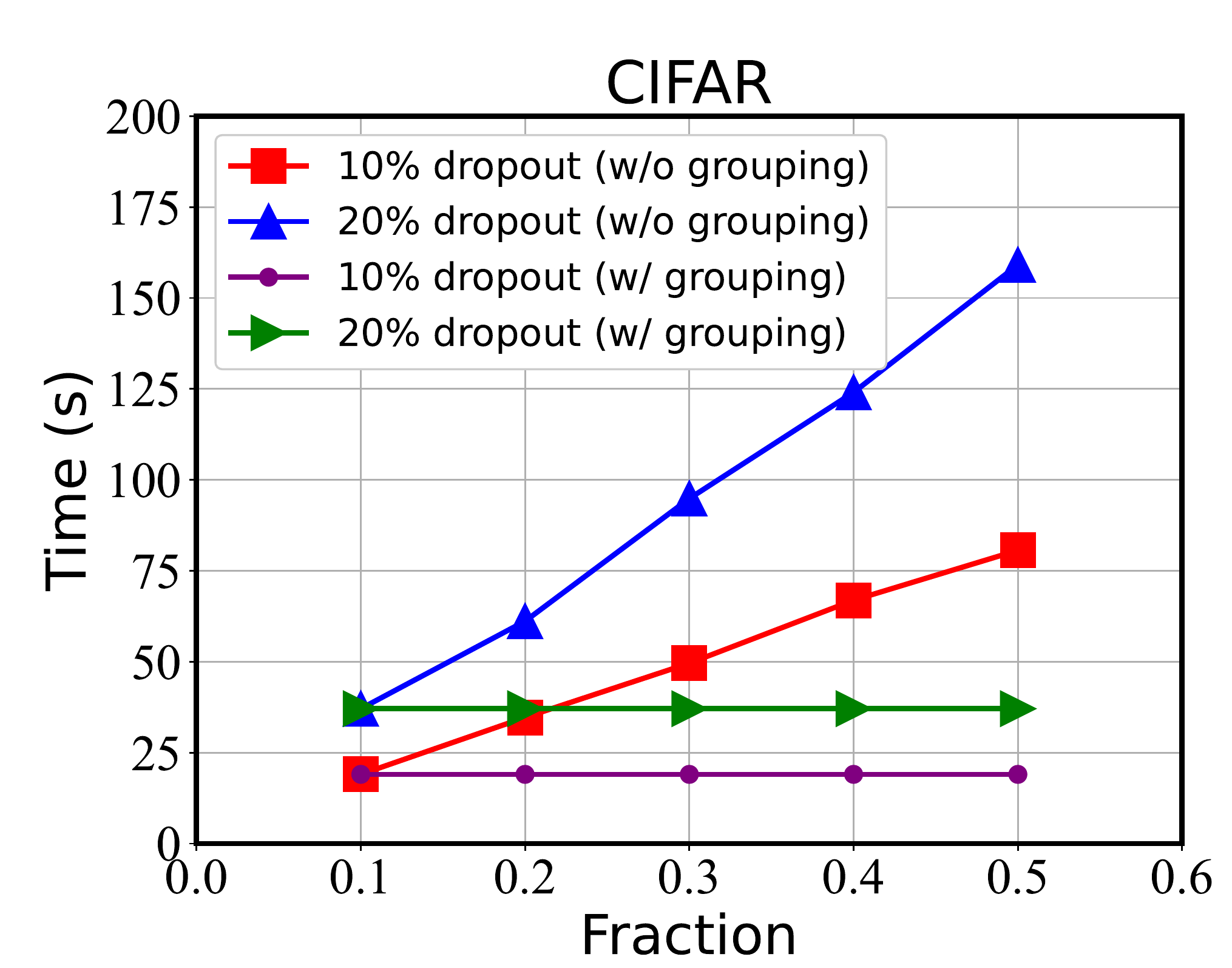}}

  \end{minipage}
    \begin{minipage}[t]{0.3\linewidth}
    \centering{\includegraphics[width=\linewidth]{./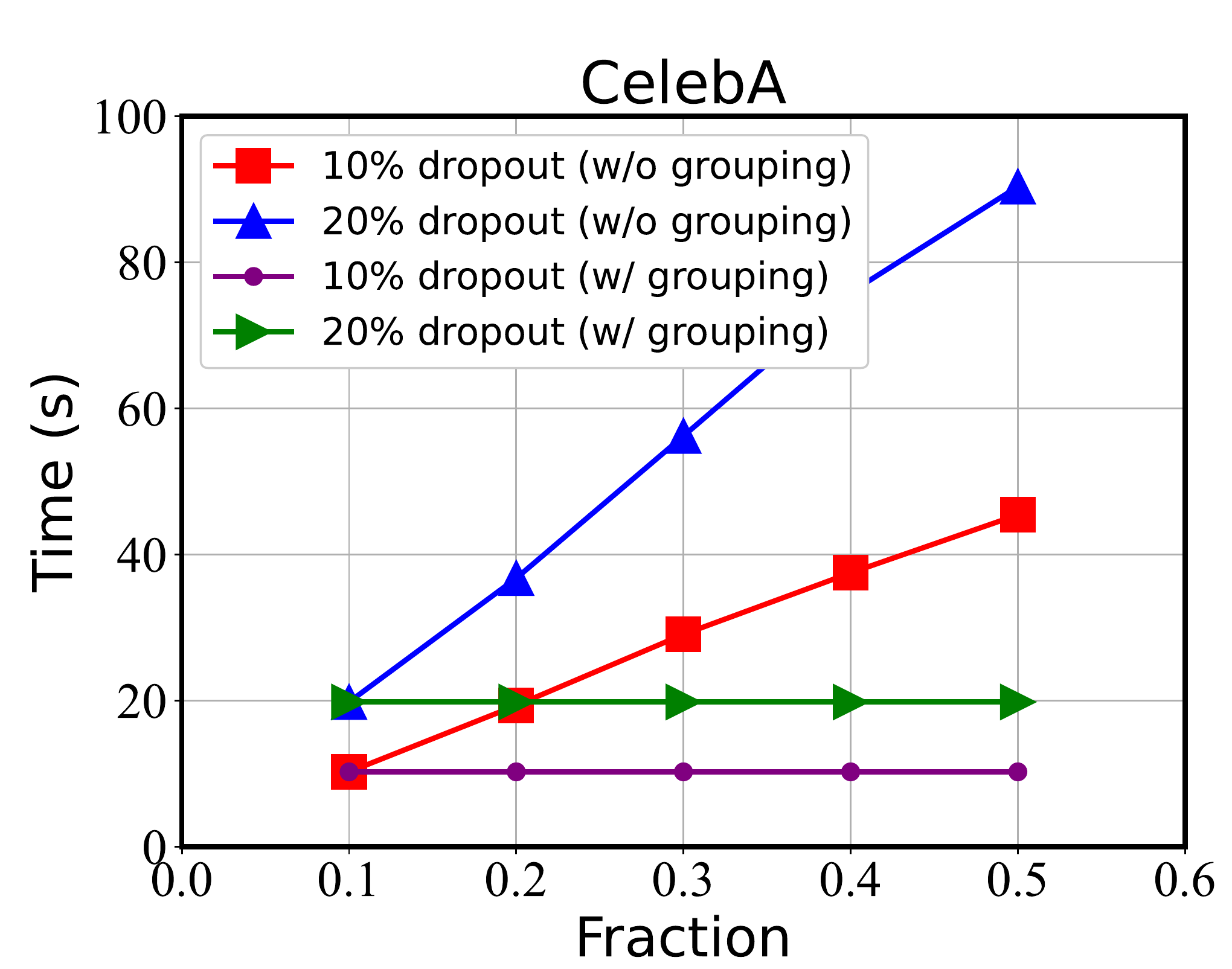}}
  \end{minipage}
  \caption{Client's computation cost in dealing with different dropout rates, over different models.}
    \label{fig:client_recovery}
\end{figure*}

% Please add the following required packages to your document preamble:
% \usepackage{booktabs}
\begin{table}[t!]
\caption{Size of Model Updates (in MB)}
\centering
\begin{tabular}{@{}lccc@{}}
\toprule
Setting                       & MNIST & CIFAR & CelebA \\ \midrule
Plaintext FL                  & 0.083 & 88.78 &   53.26     \\
Sec. FL (scaling)             & 0.083 & 88.78 &   53.26     \\
Sec. FL (16-bit quantization) & 0.042 & 44.49 &    26.63    \\
Sec. FL (8-bit quantization)  & 0.021 & 22.19 &    13.32    \\ \bottomrule
\end{tabular}
\label{tbl:model_size}
\end{table}

\iffalse
\begin{figure}[!t]
\centering
  \begin{minipage}[t]{0.485\linewidth}
    \centering{\includegraphics[width=1.73in,height=1.35in]{./cloud_cost_mnist}}
  
    % \caption{Communication cost of the provider}
  \end{minipage}%
  \begin{minipage}[t]{0.485\linewidth}
    \centering{\includegraphics[width=1.73in,height=1.35in]{./cloud_cost_cifar}}

  \end{minipage}%
   \caption{Performance of aggregating model updates at the cloud server.}
    \label{fig:cloud_cost}
\end{figure}
\fi

\subsection{Performance Evaluation}

\subsubsection{Client-Side Performance}
We first examine the client-side computation performance.
Recall that encrypting a model update requires generation of blinding factors that requires $O(|\mathcal{T}|)$ hashing operations per each, where $|\mathcal{T}|$ is the number of selected clients in a round.
The encryption cost thus scales with $|\mathcal{T}|$ (and inherently the size of the model update).
%{}
We show the client's running time of encrypting (quantized) model updates for varying fraction of clients being selected in a round, over different models in Fig. \ref{fig:client_encryption_cost}.
%
%
% The baseline computation without encryption (i.e., local training) takes $1.23$ s over the MNIST model, and $18.6$ s over the CIFAR model respectively.
%
For the smallest MNIST model, it is seen that the encryption cost is on the order of a few seconds.
For the CIFAR and CelebA models, the running times are on the order of minutes due to their substantially larger model size ($23,272,266$ parameters and $13,962,562$ parameters as opposed to $21,840$ parameters of the MNIST model).
However, it is worth noting that our system can flexibly support client grouping to largely limit the computation complexity of a client (i.e., independent of the fraction), as demonstrated in Fig. \ref{fig:client_encryption_cost} where the selected clients are grouped with size $10$. 
%
% With client grouping, the overhead is limited to $2\times$ for the MNIST model and $12\times$ for the CIFAR model.

We also examine the computation cost of a client in a recovery phase to handle dropouts, with the results are shown in Fig. \ref{fig:client_recovery}, under varying dropout rates over different models.
As expected, the running time of the client scales linearly with the dropout rate.
It is also revealed that client grouping can greatly reduce the cost.
%
% Recall that our system does not reveal the secret keys of drop-out clients.

In Table \ref{tbl:model_size}, we report the size of a model update (i.e., sizes of model update parameters), for the plaintext case and our secure design under the settings of scaling, 8-bit quantization, and 16-bit quantization, respectively.
Our secure protocol with scaling incurs no overhead on the model update size, as the bit precision remains the same. 
Our secure protocol can achieve $4\times$ reduction under 8-bit quantization and $2\times$ reduction under 16-bit quantization on the size of transferred model update in a round.
%
% When taking into account the number of rounds for convergence, our secure protocol with quantization is still beneficial and achieves less total communication cost.
%
% In particular, to achieve a target accuracy, our secure protocol with scaling needs 
Recall that our system exhibits similar behavior in accuracy evolution with regard to varying number of rounds.
Given a target number of rounds, our system with quantization can lead to $2\times$ or $4\times$ reduction on the communication, with comparable accuracy to the plaintext baseline.

\begin{table}[t!]
\caption{Performance Complexity Comparison}
\label{tbl:complexity}
\centering
\begin{tabular}{@{}ccc@{}}
\toprule
Approach     & Client Computation                 & Server Computation                            \\ \midrule
Ours         & $O(mn+md) $                  & $O(m(n-d))$                           \\
\cite{BonawitzIKMMPRS17} &$ O(n^2+mn)$ & $O(m(n-d)+md(n-d))$ \\ \bottomrule
\end{tabular}
\end{table}

%============
% for the client compleixty O(n^2+mn) in CCS17: n^2 is for generating secret shares of keys, mn is for generating masks 
% for our client complexity O(mn+md): mn is in the encryption phase. At the time of encryption, dropout is unknown. So they will involve the mutual keys with all other clients in generating masks.
% For the server complexity O(m(n-d)+md(m-d)) in CCS17.
% For the server complexity O(m(n-d)) in our system, if there are d drop-out clients, the server only receives (n-d) m-dimensional vectors. So the linear operations to be performed are O(m(n-d)). 
%
%
%================

\subsubsection{Cloud Server-Side Performance}
We now examine the costs of securely aggregating model updates to produce an updated global model under the semi-honest adversary setting and active-adversary setting respectively.
The results are plotted in Fig. \ref{fig:cloud_agg_cost}.
As expected, the computation costs scales linearly with the fraction. 
%
% It is shown that our secure protocol design in the semi-honest adversary setting incurs almost no overhead over the plaintext baseline at the cloud server side.
%
% This is because aggregating the encrypted model updates only requires efficient modulo addition.
%
It is revealed that our protocol with security against an active adversarial cloud server (i.e., computation integrity against the cloud server) incurs almost no overhead over the semi-honest setting ($1\times$, $1.005\times$, and $1.013\times$ over the MNIST, CIFAR, and CelebA models respectively).
%
% For the large CIFAR model which requires more time in signature generation, the overhead is also small and remains constant at about $1.09\times$.
%
Such minimal performance overhead is promised as no paging is required.

\begin{figure*}[!t]
\centering
\small
  \begin{minipage}[t]{0.3\linewidth}
    \centering{\includegraphics[width=\linewidth]{./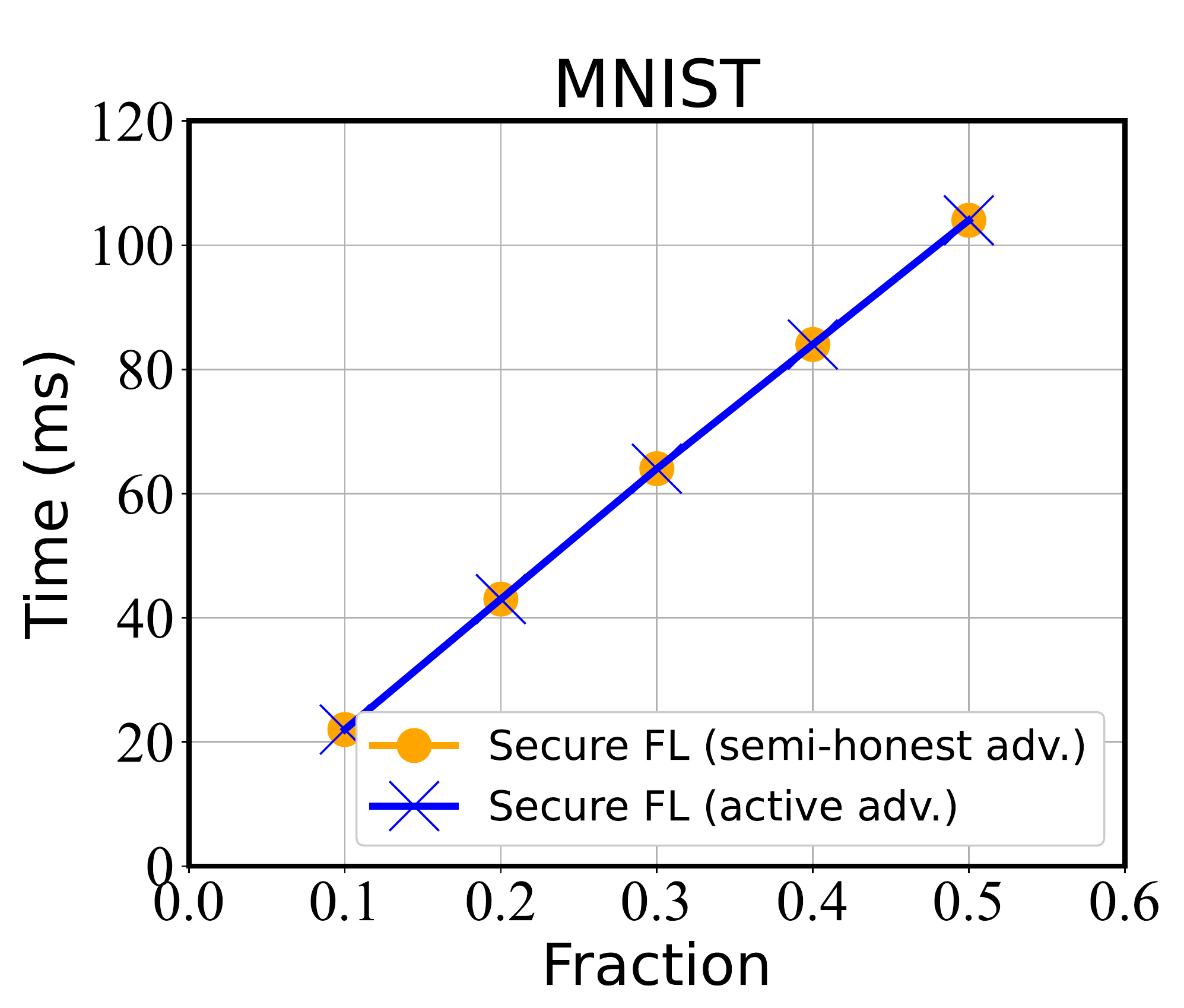}}
    
  \end{minipage}%
  \begin{minipage}[t]{0.3\linewidth}
    \centering{\includegraphics[width=\linewidth]{./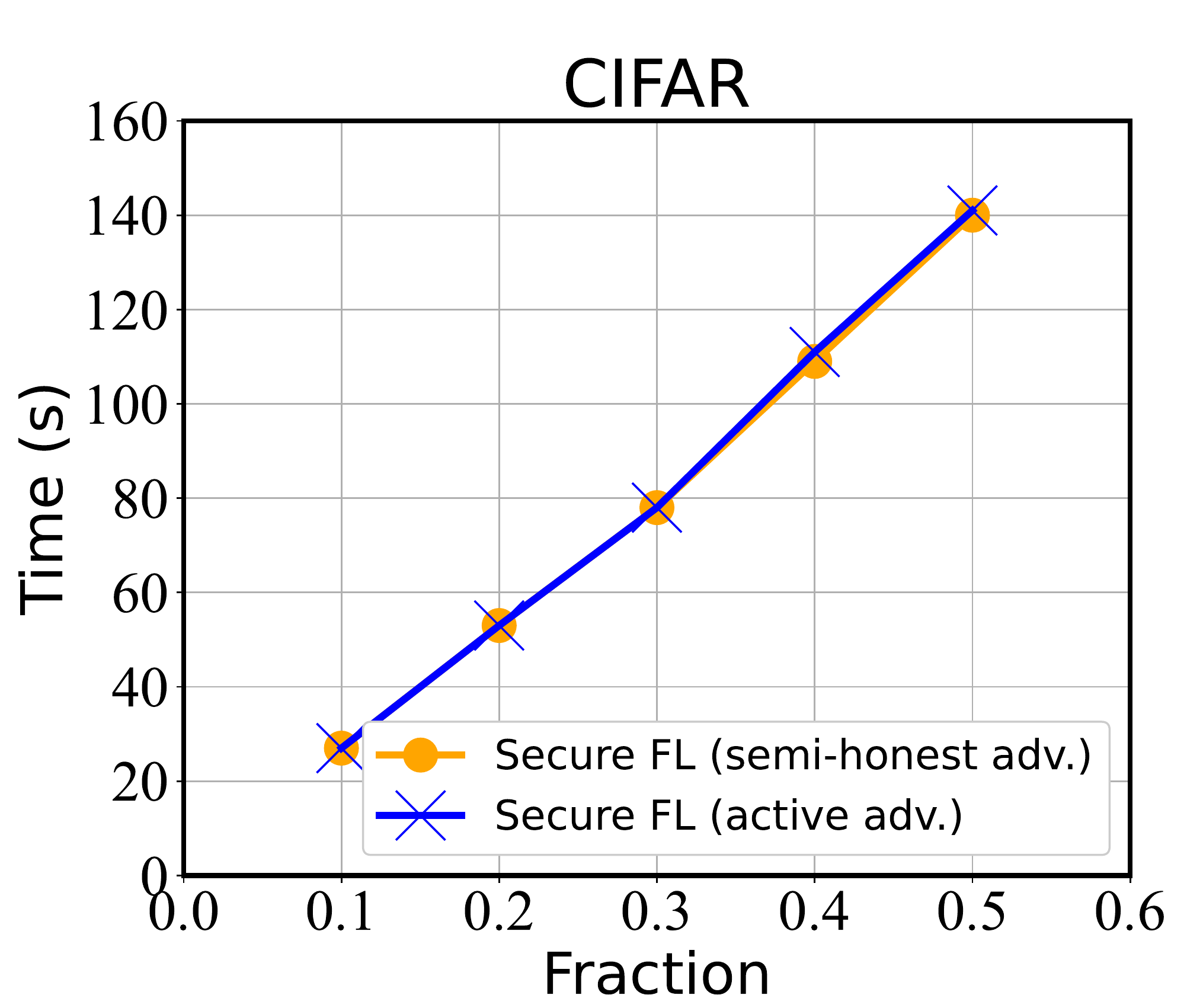}}

  \end{minipage}
    \begin{minipage}[t]{0.3\linewidth}
    \centering{\includegraphics[width=\linewidth]{./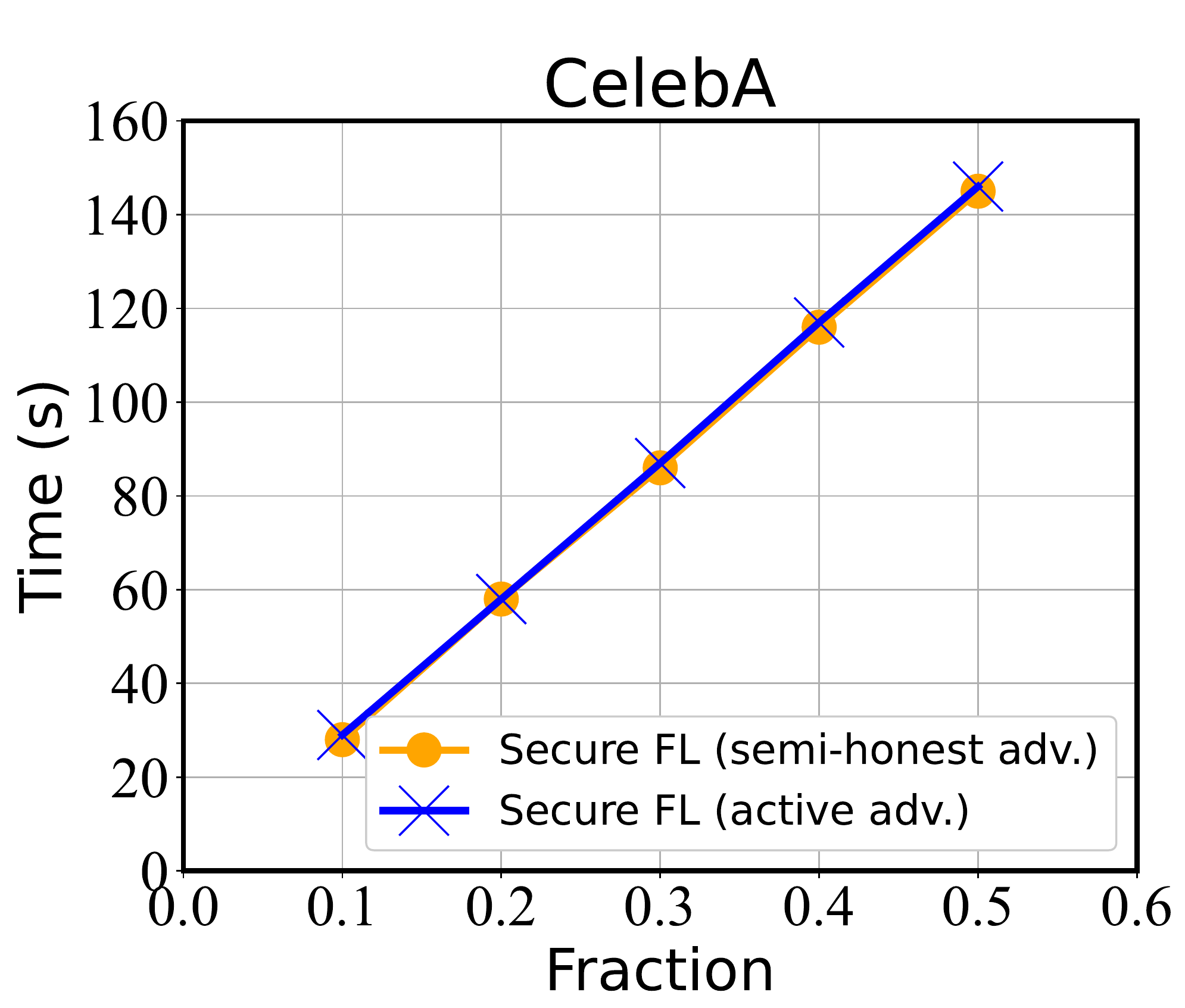}}
  \end{minipage}
  \caption{Cost of aggregating model updates at the cloud server, under different adversary settings and over different models}.
    \label{fig:cloud_agg_cost}
    \vspace{-10pt}
\end{figure*}

\iffalse
\begin{figure}[t!]
\centerline{\includegraphics[width=0.35\textwidth]{./cloud_cost_mnist}}
   \caption{Cost of aggregating model updates at the cloud server (MNIST).}
    \label{fig:cloud_cost_mnist}
\end{figure}

\begin{figure}[t!]
\centerline{\includegraphics[width=0.35\textwidth]{./cloud_cost_cifar}}
   \caption{Cost of aggregating model updates at the cloud server (CIFAR).}
    \label{fig:cloud_cost_cifar}
\end{figure}
\fi

\subsubsection{Comparison with Prior Work}

We now make comparison with the most related prior work \cite{BonawitzIKMMPRS17} without heavy cryptographic operations.
Firstly, we compare the computational complexity on the client side and on the cloud server side respectively.
Suppose the dimension of each model update vector is $m$, and the number of clients being selected in a round of the federated learning procedure is $n$, and the number of dropped clients is $d$.
Overall the secure aggregation approach in our system leads to $O(mn+md)$ computation on the client side and $O(m(n-d))$ computation on the cloud server side.
In comparison, according to \cite{BonawitzIKMMPRS17}, their scheme leads to $O(n^2+mn)$ computation on the client side and $O(m(n-d)+md(n-d))$ on the cloud server side.
Table \ref{tbl:complexity} summarizes the comparison of the asymptotic computational complexity.

\begin{table}[t!]
\caption{Client Cost Comparison with Prior Work \cite{BonawitzIKMMPRS17}}
\label{tbl:client_cost_comparison_ccs17}
\begin{tabular}{@{}ccccccc@{}}
\toprule
\multirow{2}{*}{Dropouts} & \multirow{2}{*}{\begin{tabular}[c]{@{}c@{}}Client Cost\\ (ms)\end{tabular}} & \multicolumn{5}{c}{Fraction}          \\ \cmidrule(l){3-7} 
                          &                                                                             & 0.1   & 0.2   & 0.3   & 0.4   & 0.5   \\ \midrule
\multirow{2}{*}{0\%}      & Ours                                                                        & 0.95  & 1.09  & 1.2   & 1.35  & 1.49  \\ \cmidrule(l){2-7} 
                          & {[}7{]}                                                                     & 12.05 & 23.38 & 34.98 & 47.16 & 58.43 \\ \midrule
\multirow{2}{*}{10\%}     & Ours                                                                        & 16    & 28    & 42    & 56    & 69    \\ \cmidrule(l){2-7} 
                          & {[}7{]}                                                                     & 11.98 & 23.42 & 35.04 & 46.91 & 58.32 \\ \midrule
\multirow{2}{*}{20\%}     & Ours                                                                        & 28    & 59    & 82    & 108   & 134   \\ \cmidrule(l){2-7} 
                          & {[}7{]}                                                                     & 11.87 & 23.42 & 35.77 & 47.02 & 58.3  \\ \bottomrule
\end{tabular}
\end{table}

We note that the work \cite{BonawitzIKMMPRS17} does not present real machine learning based experiments.
To have empirical performance comparison with \cite{BonawitzIKMMPRS17}, we test their scheme\footnote{Implementation used: https://github.com/55199789/PracSecure} over the MNIST model to measure the client-side and server-side runtime costs, with varying dropout rates and fraction of selected clients per around.
Table \ref{tbl:client_cost_comparison_ccs17} gives the comparison of the client-side cost.
Our design is (up to $39\times$) more efficient than the work \cite{BonawitzIKMMPRS17} when the dropout rate is zero.
As the dropout rate increases, our design has higher client cost (limited to $2.3\times$), as each online client assists by computing blinding factors scaling to the number of drop-out clients.
%

% Please add the following required packages to your document preamble:
% \usepackage{booktabs}
% \usepackage{multirow}
\begin{table}[t!]
\caption{Server Cost Comparison with Prior Work \cite{BonawitzIKMMPRS17}}
\label{tbl:server_cost_comparison_ccs17}
\begin{tabular}{@{}ccccccc@{}}

\toprule
\multirow{2}{*}{Dropouts} & \multirow{2}{*}{\begin{tabular}[c]{@{}c@{}}Server Cost\\ (ms)\end{tabular}} & \multicolumn{5}{c}{Fraction}             \\ \cmidrule(l){3-7} 
                          &                                                                             & 0.1   & 0.2   & 0.3    & 0.4    & 0.5    \\ \midrule
\multirow{2}{*}{0\%}      & Ours                                                                        & 22    & 43    & 64     & 84     & 104    \\ \cmidrule(l){2-7} 
                          & {[}7{]}                                                                     & 5.74  & 11.26 & 16.32  & 22.59  & 29.66  \\ \midrule
\multirow{2}{*}{10\%}     & Ours                                                                        & 38    & 79    & 115    & 153    & 196    \\ \cmidrule(l){2-7} 
                          & {[}7{]}                                                                     & 13.53 & 41.15 & 83.53  & 141.98 & 220.5  \\ \midrule
\multirow{2}{*}{20\%}     & Ours                                                                        & 35    & 75    & 111    & 148    & 189    \\ \cmidrule(l){2-7} 
                          & {[}7{]}                                                                     & 19.83 & 64.17 & 138.89 & 238.24 & 366.59 \\ \bottomrule
\end{tabular}
\end{table}

Regarding the server-side cost, it is observed that as the dropout rate and the fraction of selected clients increase, our design (semi-honest adv. setting) consumes less computation. 
This is because the server in their scheme needs to perform reconstruction of secret keys over collected secret shares and re-compute masks to be subtracted from the aggregate sum over the online clients.
We emphasize that unlike \cite{BonawitzIKMMPRS17}, our system does not reveal the secret keys of drop-out clients, so their direct participation in future rounds is not affected. 
Meanwhile, our system can also provide much stronger security (computation integrity) against the server with minimal overhead, as demonstrated above.

\section{Conclusion}
\label{sec:conclusion}

In this paper, we present a system design for federated learning, which allows clients to provide obscured model updates while aggregation can still be supported.
Our system first departs from prior works by building on a cherry-picked cryptographic aggregation protocol, which promises the advantages of lightweight encryption and aggregation as well as the ability to handle drop-out clients without exposing their secret keys.
For higher communication efficiency, our system also adapts the latest advancements in quantization techniques for compressing individual model updates.
Furthermore, our system also provides security beyond the common semi-honest adversary setting, ensuring the computation integrity at the cloud server. 
We conduct an extensive evaluation over popular benchmark datasets, and the results validate the practical performance of our system.

\section*{Acknowledgement}
This work was supported in part by the Guangdong Basic and Applied Basic Research Foundation under Grant 2021A1515110027, in part by the Shenzhen High-Level Talents Research Start-up Fund, in part by the Australian Research Council (ARC) Discovery Projects under Grants DP200103308 and DP180103251, in part by a Monash-Data61 collaborative research project (Data61 CRP43), in part by the Research Grants Council of Hong Kong under Grants CityU 11217819, 11217620, 11218521, N\_CityU139/21, R6021-20F, and RFS2122-1S04, in part by Shenzhen Municipality Science and Technology Innovation Commission under Grant SGDX20201103093004019, and in part by the National Natural Science Foundation of China under Grant 61572412.

\balance
\bibliographystyle{IEEEtran}

\bibliography{references}
% \vspace{-15pt}

\end{document}